\documentclass[12pt]{article}
\usepackage{amsmath,amsthm,amssymb,euscript,epsf,epsfig,color,bbm}
\usepackage{array,comment}
\usepackage{fancybox}
\usepackage{hyperref} 
\usepackage{graphicx,ytableau}
\usepackage{cite}
\usepackage{hyperref}
\usepackage{xcolor}
\usepackage{young} 
\usepackage{tikz,pgfplots}
\usetikzlibrary{angles,quotes,arrows.meta,chains,matrix,scopes,calc,intersections,positioning,shapes.misc,decorations.markings,shapes.geometric}
\tikzset{cross/.style={cross out, draw=black,thick, minimum size=3*(#1-\pgflinewidth), inner sep=0pt, outer sep=0pt},
cross/.default={3pt}}

\newcommand{\be}{\begin{equation}}
\newcommand{\ee}{\end{equation}}
\newcommand{\bea}{\begin{eqnarray}\displaystyle}
\newcommand{\eea}{\end{eqnarray}}
\newcommand{\nnm}{\nonumber}

\makeatletter
\@addtoreset{equation}{section}
\makeatother

\def\one{{\hbox{ 1\kern-.8mm l}}}
\def\zero{{\hbox{ 0\kern-1.5mm 0}}}

\def\mC{ \mathbb{C}} 
\def\tr{ {\rm{tr}}}

\def\bm{ \mathbf{m} } 
\def\bcC{ \mathbf{\cC} }

\def\cA{{\cal A}} \def\cB{{\cal B}} \def\cC{{\cal C}}
\def\cD{{\cal D}}  
\def\cG{{\cal G}}  \def\cI{{\cal I}}
 \def\cK{{\cal K}} 
\def\cM{{\cal M}} \def\cN{{\cal N}} 
\def\cP{{\cal P}}  
\def\cS{{\cal S}}  
\def\cV{{\cal V}}  
 \def\cZ{{\cal Z}}
\def\trunc{ {\rm trunc} }

\newtheorem{proposition}{Proposition} 
\newtheorem{lemma}{Lemma}

\begin{document}

\hfill{\normalsize QMUL-PH-26-07}

\begin{center} 
{\LARGE \bf   Gauge-string  duality, monomial bases and graph determinants.  
\\
 }

 \medskip

\bigskip

{\bf Garreth Kemp}$^{a,*}$, {\bf Sanjaye Ramgoolam}$^{b,\dag}  $

\bigskip

$^a${\em Department of Physics,}\\
{ \em University of Johannesburg, } \\
{\em  Auckland Park, 2006, South Africa.  }\\
\medskip
$^{b}${\em  Centre for Theoretical Physics, }\\
{\em Department of Physics and Astronomy,} \\
{\em Queen Mary University of London,} \\
{\em  London E1 4NS, United Kingdom. }\\
\medskip
E-mails:  $^{*}$garry@kemp.za.org,
\quad $^{\dag}$s.ramgoolam@qmul.ac.uk

\end{center}

\begin{abstract} 

Questions  at the intersection of the  AdS/CFT correspondence   and quantum information theory motivate the study of projectors in sequences of subalgebras of finite-dimensional commutative associative semisimple algebras $\cA$, obtained by incrementally adjoining one generator at each step to produce a non-linear generating set for $\cA$. 

We define degeneracy graphs, which are finite layered tree graphs whose nodes represent projectors in the successive subalgebras. Using combinatorial properties of the degeneracy graph, we give a simple formula for constructing a linear basis of $\cA$ in terms of monomials in the generators.The nodes can be labelled by formal variables corresponding to the eigenvalues of the generators added at each layer. 

We prove that the construction is compatible with the required counting of projectors in $\cA$, and give explicit constructions of the projectors in terms of the monomials,  in the cases of one- and two-layer degeneracy graphs with arbitrary numbers of nodes. More generally, we  provide  extensive computational evidence for the invertibility of the matrix relating the proposed monomial basis to the projector basis, by evaluating its determinant. In the  1-layer case, this is a Vandermonde determinant. A simple formula  for the non-vanishing determinant in the general layer case is conjectured and supported by the computational data.

The construction is illustrated with examples including centres of symmetric group algebras and maximally commuting subalgebras generated by Jucys–Murphy elements. We outline applications of the monomial basis to algorithms for constructing matrix units in non-commutative semisimple algebras, with relevance to orthogonal bases of multi-matrix gauge-invariant operators and to quantum information theory.

\end{abstract} 

\newpage 

\tableofcontents

\newpage 

\section{ Introduction } 

An interesting property of the character tables of the symmetric groups $S_n$, of all permutations of $\{ 1 , 2, \cdots , n \}$,  is that the list of characters of a small number of conjugacy classes suffices to distinguish all the irreducible representations. 
This is related to a structural property of the group algebra $ \mC ( S_n )$, which can be viewed as the vector space of formal sums of group elements with complex coefficients, and with product defined using the group multiplication. 

\vskip.2cm 

The centre, $ \cZ ( \mC( S_n) ) $ of $ \mC ( S_n)$, is the sub-space which commutes with all $ \mC ( S_n)$. Its dimension is equal to the number of partitions of $n$. It has a basis of conjugacy class sums, labelled by partitions $p $ of $n$, which consist of 
sums over all group elements with the cycle structure determined by $p$. It has another basis labelled by irreducible representations $R$, corresponding to Young diagrams with $n$ boxes, consisting of projectors $P_R$. The coefficients for the change of basis are given in terms of the irreducible  characters 
\bea 
\chi^R_p = \tr (  D^R ( \sigma_p )  ) 
\eea 
where $ D^R ( \sigma_p) $ is the matrix representing a group element $ \sigma_p \in S_n $ with conjugacy class $p$ in the irreducible representation $R$.  

\vskip.2cm 

Let $d_R$ be the dimension of the irrep $R$, equivalently the character for the trivial group element. Consider $ p $ of the form $ [ k , 1^{ n-k}]$, i.e. partitions with one part of length $ k \ge 2 $ and remaining parts of length $1$. This specifies a cycle structure of permutations in $ S_n$, which have one non-trivial cycle of length $k$ and remaining cycles of length $1$. 
Let $ T_k$ be the sum of permutations in the conjugacy class $ [ k , 1^{ n-k}]$ so that 
\bea 
{ \chi^R ( T_k ) \over d_R } = {  |T_k| \chi^R_{ [ k,1^{ n-k}] }  \over d_R } 
\eea
where $ |T_k| $ is the number of $S_n$ group elements in the conjugacy class $ [ k , 1^{ n-k} ] $. 

\vskip.2cm

It turns out that for symmetric groups $ S_2 , S_3 , S_4 , S_5 , S_7 $, the normalised characters 
 ${ \chi^R ( T_2 ) \over d_R } $ uniquely characterise the  irreps $R$, i.e. no two irreps $R$ have the same normalised character for the conjugacy class $ [ 2,1^{ n-2} ]$. The lists of length $2$ consisting of 
 \bea 
 \{ { \chi^R ( T_2 ) \over d_R }  , { \chi^R ( T_3 ) \over d_R }  \} 
 \eea
distinguish all irreps $ R $ of $S_n$  for $ n $ up to $14$. 
The lists of normalised characters
\[
\left\{ \frac{\chi^R(T_2)}{d_R}, \dots, \frac{\chi^R(T_6)}{d_R} \right\}
\]
distinguish all irreps $R$ for $n$ up to $81$ \cite{KempRam}.

\vskip.2cm 

It   was also explained in \cite{KempRam} that this irrep-distinguishing property of  subsets of conjugacy classes is related to the fact that any projector $ P_R \in \cZ ( \mC ( S_n) ) $ can be expressed as a linear combination of a finite number of powers of the class sums for  the conjugacy classes. For example, for any $ n \in \{ 6,8, \cdots , 14 \}$, we can write any projector 
$P_R \in \cZ ( \mC ( S_n ) ) $ as a finite sum 
\bea 
P_{ R } = \sum_{ a , b } c^R_{ a, b } T_2^a T_3^b 
\eea
for some constants $ c^R_{ab}$. 
Thus a finite set of monomials in $T_2 , T_3$ form a spanning set for $ \cZ ( \mC ( S_n ) ) $. 
We  describe this by saying that $ \{  T_2  , T_3  \} $ form a  non-linear generating set for 
$ \cZ ( \mC ( S_n ) ) $ for $ n $ up to $14$. Similarly,  $ \{ T_2 , T_3 , \cdots , T_{ 6 } \} $ form a non-linear generating set for $ \cZ ( \mC ( S_n ) ) $ for $ n \in \{42,43, \cdots  , 79,81  \}$. 

\vskip.2cm 

The study of non-linear generating sets of conjugacy classes and their ability to distinguish irreps $R$ was motivated by the physics of the half-BPS sector local operators in $ \cN=4$ super-Yang Mills theory with $U(N)$ gauge group in connection with the AdS/CFT correspondence \cite{malda,gkp,witten}.  The half-BPS sector consists of polynomial holomorphic gauge invariant functions constructed from a complex  matrix $Z$. An orthogonal basis, in the CFT inner product, for polynomials of degrees $n$ can be labelled by Young diagrams $R$ with $n$ boxes and no more than $N$ rows \cite{CJR}. The construction of the basis elements is directly related to the projectors $P_R \in \cZ ( \mC ( S_n ) ) $. The Young diagram operators are related to half-BPS geometries for large $n$ \cite{LLM}. 

\vskip.2cm 

The identification of the half-BPS geometries using asymptotic multipole moments of the gravitational fields \cite{IILoss}, and using one-point functions in the CFT \cite{SkTa}, remain active areas of interest in AdS/CFT and inform ongoing discussions on information loss in black hole physics.   The multipole moments are related to Casimirs of $U(N)$ \cite{IILoss}, which by Schur-Weyl duality, are related to central elements in $\cZ ( \mC ( S_n ) ) $. In this setting the consideration of small CFT probes with increasing classical dimension, equivalently increasing energy in the AdS dual,  is related to the consideration of algebras obtained from  sequences $ \{ T_2 , T_3 , \cdots \} $ obtained by incrementally adding a generator. This led to the investigation of an integer sequence $k_* (n ) $ defined, for each $n$, to be the minimal positive integer such that $ \{ T_2 , T_3 , \cdots , T_{ k_*(n) } \} $ form a non-linear generating set for $ \cZ ( \mC ( S_n ) ) $ and have normalised characters which identify any Young diagram $R$ among the irreps of $S_n$. Lower and upper bounds on the large $n$ growth of $k_*(n)$ were obtained in \cite{ProjDetect} and \cite{KempDominance} respectively. 

\vskip.2cm 

Minimal non-linear generating sets are used in eigenvalue systems for constructing integer vectors in the vector space spanned by ribbon graphs, which are enumerated by Kronecker coefficients \cite{QMRibb}. For a wider perspective on current research on Kronecker coefficients and mathematical applications of their connections to tensor invariants, see \cite{LiZhangXia2025}; for emergent phases dominated by ribbon-graph-like structure in the statistical thermodynamics of random regular graphs, see \cite{GorVal}. 

\vskip.2cm 

The generating sets also  arise in eigenvalue systems used to construct orthogonal bases of multi-matrix invariants \cite{PRS}. The computational complexity of quantum algorithms for discriminating projectors $P_R$ using minimal generating subsets of conjugacy classes of $S_n$ was investigated in \cite{ProjDetect}, with the interesting result that the complexities are polynomial in $n$, despite the fact that the number of Young diagrams grows as $e^{\sqrt{n}}$ at large $n$. Reference \cite{ProjDetect} also considered related projection operators— related to  the Wedderburn-Artin matrix basis for an algebra $\cK(n)$ —connected to Kronecker coefficients, and found similar polynomial scaling. The verification of non-vanishing projectors related to  Kronecker coefficients using a different quantum algorithm, also of polynomial complexity,  was studied in \cite{BCGHZ}.

\vskip.2cm

Beyond gauge-string duality and complexity questions arising therein, a related motivation for studying minimal generating sets of conjugacy classes arises from viewing amplitudes of low-dimensional topological field theories. TQFTs based on finite groups $G$ \cite{DW,FHK} can be viewed as a   constructive framework, complementary to Galois-theoretic approaches,  for analysing integrality, positivity and duality properties that relate representation-theoretic data to group multiplication endowed with geometric structure \cite{IDFCTS,RS,PRSe,CSCCT,STV,RowCol}.  

\vskip.2cm 

This background work focused on the detection of projectors using minimal generating sets of conjugacy classes. A structurally complementary problem concerns explicit construction: given such a generating set, what is the complexity of building the projectors themselves? Motivated by quantum algorithms, two-dimensional topological field theory and AdS/CFT, we are thus led to the following question. Given a non-linearly generating set of conjugacy classes for the centre of a group algebra $ \cZ(\mC(G)) $, is there an algorithm which, taking the character table of $G$ as input, constructs a basis for $ \cZ(\mC(G)) $ expressed as monomials in the elements of the non-linearly generating set? This paper answers this question in the affirmative. This constitutes our first main result.

\vskip.2cm 

An important observation is that the question admits a natural formulation in the broader setting of finite-dimensional commutative associative semisimple (CASS) algebras. Semisimplicity implies that such algebras are equipped with a non-degenerate trace pairing. By the Wedderburn–Artin theorem, any such algebra admits a projector basis generalising that of $\cZ(\mC(G))$.

\vskip.2cm

Consider such an algebra $\cA$ of finite dimension $D$ with a minimal non-linear generating set $\{\cC_1, \cC_2, \cdots, \cC_L\}$, where $\cC_i \in \cA$. We take minimality to mean that no proper subset of these generators forms a non-linear generating set.  We consider an ordered sequence of subalgebras
\bea
\cA_1 \rightarrow \cA_2 \rightarrow \cdots \rightarrow \cA_L \equiv \cA
\eea
of increasing dimensions $D_1 < D_2 < \cdots < D_L \equiv  D$, where $\cA_1$ is generated by $\cC_1$, $\cA_2$ is generated by $\{\cC_1,\cC_2\}$, and so forth. Each algebra in the sequence has a basis of projectors. 

\vskip.2cm 

In Section \ref{sec:degengraphs} we five the combinatorial construction of a layered degeneracy graph with $L$ layers associated to any such sequence of algebras. At layer $i$, the nodes correspond to projectors in $\cA_i$ and carry a label denoting the eigenvalue of $\cC_i$ on the corresponding projector. 

\vskip.2cm

The edges connecting nodes at layer $1$ to those at layer $2$ are determined by a partition $p_1$ of $D_2$ with $D_1$ parts. For $2 \le i \le L-1$, the connectivity from layer $i$ to layer $i+1$ is determined by compositions $c_i$ of $D_{i+1}$ with $D_i$ parts. These are expressions of $D_{i+1}$ as a sum of $D_i$ positive integers, where different orderings are treated as distinct compositions; forgetting the order yields a partition of $D_{i+1}$ with $D_i$ parts.

\vskip.2cm

This layered structure leads naturally to a combinatorial description of candidate monomials forming a linear basis for $ \cA$.
In section \ref{sec:monombasis} we state our main conjecture: that there exists a monomial basis for $\cA_L$ determined by the degeneracy graph and given explicitly in \eqref{mainconj}. This specifies a basis  set of monomials which we refer to as   $ \cS ( \cA_1 \rightarrow \cA_2 \rightarrow \cdots \rightarrow \cA_L) $.  As a first consistency check, we give, in section \ref{sec:Counting},  a general counting proof which shows that the number of monomials is equal to the dimension of $ \cA_L $.  We also give the complete proof of the validity of the conjecture for $ L=1,L=2$ in section \ref{sec:PfConst}. 

\vskip.2cm

The monomial basis conjecture implies that the matrix relating the monomials in the set $ \cS ( \cA_1 \rightarrow \cA_2 \rightarrow \cdots \rightarrow \cA_L) $ to the projector basis of $ \cA_L$ is invertible. In section \ref{sec:GraphsDets} we give a formula for the matrix elements of this change of basis (equation \eqref{MatCoeffs}), in terms of the eigenvalues of $ \cC_i$ and conjecture a general form for the non-vanishing determinant of the matrix \eqref{DetConj2}. 
There is substantial computational evidence for this determinant conjecture. The code written in SAGE is available as an ancillary file with the arXiv submission.  A guide to the code, along with examples of degeneracy graphs,  is given in Appendix \ref{sec:AppCode}. 

\vskip.2cm 

Section \ref{sec:Props} discusses some properties of the conjectured monomial basis and section \ref{sec:Apps} explains three applications.

\vskip.5cm

\section{Degeneracy graphs  and generating sequence of CASS algebras }
\label{sec:degengraphs} 

A degeneracy graph is a layered graph. It may be visualised as a sequence of lines, each populated by a finite number of nodes. Choose a positive integer, $L$. This number specifies the number of layers, or depth, of the graph. A node at layer $i$ connects to a subset of nodes at layer $i+1$. An example  with $L=2$ is 
\begin{equation}\label{examp1} 
\begin{tikzpicture}[x=1cm,y=1cm]
	
	\tikzset{
		pt/.style={circle, draw=black, fill=red!25, line width=0.6pt, inner sep=1.6pt},
		ed/.style={draw=gray!35, line width=0.8pt}
	}
	
	% --- coordinates ---
	\node[pt] (x1_2) at (0,  1.0) {};
	\node[pt] (x1_1) at (0, -0.6) {};
	
	\node[pt] (x2_4) at (6,  1.7) {};
	\node[pt] (x2_3) at (6,  1.0) {};
	\node[pt] (x2_2) at (6, -0.6) {};
	\node[pt] (x2_1) at (6, -1.3) {};
	
	% --- edges ---
	\draw[ed] (x1_2) -- (x2_3);
	\draw[ed] (x1_2) -- (x2_4);
	\draw[ed] (x1_1) -- (x2_2);
	\draw[ed] (x1_1) -- (x2_1);
	
\end{tikzpicture}
\end{equation}
and an example with $L=3$ is 
\begin{equation}\label{examp2} 
	\begin{tikzpicture}[x=1cm,y=1cm]
		
		\tikzset{
			pt/.style={circle, draw=black, fill=red!25, line width=0.6pt, inner sep=1.6pt},
			ed/.style={draw=gray!35, line width=0.8pt}
		}
		
		% --- coordinates ---
		% Layer 1
		\node[pt] (x1_2) at (0,  2.2) {};
		\node[pt] (x1_1) at (0,  1.2) {};
		
		% Layer 2
		\node[pt] (x2_4) at (3,  2.7) {};
		\node[pt] (x2_3) at (3,  2.2) {};
		\node[pt] (x2_2) at (3,  1.2) {};
		\node[pt] (x2_1) at (3,  0.7) {};
		
		% Layer 3
		\node[pt] (x3_7) at (6,  3.2) {};
		\node[pt] (x3_6) at (6,  2.7) {};
		\node[pt] (x3_5) at (6,  2.2) {};
		\node[pt] (x3_4) at (6,  1.7) {};
		\node[pt] (x3_3) at (6,  1.2) {};
		\node[pt] (x3_2) at (6,  0.7) {};
		\node[pt] (x3_1) at (6,  0.2) {};
		
		% --- edges layer 1 -> 2 ---
		\draw[ed] (x1_2) -- (x2_4);
		\draw[ed] (x1_2) -- (x2_3);
		\draw[ed] (x1_1) -- (x2_2);
		\draw[ed] (x1_1) -- (x2_1);
		
		% --- edges layer 2 -> 3 ---
		\draw[ed] (x2_4) -- (x3_7);
		\draw[ed] (x2_4) -- (x3_6);
		
		\draw[ed] (x2_3) -- (x3_5);
		\draw[ed] (x2_3) -- (x3_4);
		
		\draw[ed] (x2_2) -- (x3_3);
		
		\draw[ed] (x2_1) -- (x3_2);
		\draw[ed] (x2_1) -- (x3_1);
		
	\end{tikzpicture}
\end{equation}

The more detailed combinatorial characterisation of the graphs will be given below. 
It is motivated by the study of centres of group algebras as explained in the introduction, and the general set-up is  that of  finite dimensional commutative  associative semi-simple (CASS)  algebras. A  CASS algebra  $\cA_L$  over the complex numbers $ \mC $, of dimension $D_L$,  has a basis of projectors $\{ P_I : 1 \le I \le D_L \} $ obeying 
\bea\label{prodProj}  
P_I P_J = \delta_{ IJ } P_{ I } 
\eea
with identity given by 
\bea\label{expI}      
\mathbf{ 1 }  = \sum_{ I } P_I 
\eea
There is a non-degenerate bilinear pairing 
\bea\label{pairing} 
\langle P_I , P_J \rangle = \delta_{ IJ} 
\eea
This algebra can be realised as diagonal matrices $ {\rm Diag }_{ D_L } ( \mC ) $ of size $D_L$ with complex entries. $P_I $ maps to the diagonal matrix with $1$ in the $I$'th entry and zeroes elsewhere. The identity is the  $ \mathbf{ 1 } $ is the unit matrix, and the pairing of two matrices $A,B$ is $ \tr ( A B) $. The Wedderburn-Artin theorem (see for example \cite{CurtRein}) for semi-simple associative algebras, specialised to the commutative case, implies that any CASS algebra has such a basis. Examples of interest in physics include  the centres of group algebras 
as well as maximally commutative sub-algebras of group algebras, such as the algebra generated by Jucys-Murphy elements in the group algebras of symmetric groups. 

We consider the finite dimensional algebra $ \cA_L$, along with a choice of an ordered list of  elements  $ \cC_1 , \cC_2 , \cdots , \cC_L$ which have the property that these elements form a minimal generating set for     the CASS  algebra $ \cA_L $ of finite dimension $D_L$.  Each generator is a finite linear combination of the  projectors.  We have a sequence of  finite-dimensional sub-algebras  
\bea\label{algebras} 
	\cA_1 \rightarrow \cA_2 \rightarrow \cdots \rightarrow \cA_L.
\eea
$\cA_1$ is generated by $ \cC_1 $  : it is the space spanned as a vector space over $ \mC$ by 
$ \{ \mathbf{1}  , \cC_1 , \cC_1^2 , \cdots , \cC_1^{ D_1 -1 } \} $, so that it has dimension $ D_1$. $ \cA_{ 2} $ is generated by $\{  \cC_1 , \cC_2 \} $  : it is spanned by $  \mathbf{1}  $ along with monomials in $ \cC_1 , \cC_2$ with the condition that it has dimension $ D_2$ as a vector space over $ \mC$. For all 
$i$, $ \cA_{ i} $ is generated by $ \{ \cC_1 , \cdots , \cC_i \} $ and has vector space $ D_i$.  All these sub-algebras are semi-simple, with the non-degenerate pairing obtained by specialising \eqref{expI} from $ \cA_L$. They  each have a projector basis by the Wedderburn-Artin theorem.

\subsection{Degeneracy graph data } 
\label{sec:degrapdat} 

The structure of the degeneracy graph is specified by the following data.
\begin{enumerate}
  \item A sequence of positive integers $D_1 , D_2 , \cdots , D_{L} $ obeying the inequalities 
  \bea
  	D_1 < D_2 < \cdots < D_{L-1} < D_L.
  \eea
 The number of nodes in layer $i$ is given by $D_i$.  

  \item A partition $p_{1}$ of $D_2$ with $D_1$ parts. The positive integer parts are organised into weakly increasing order
  \bea
  	p_{1} = \{ p_{1,1} , p_{1,2} , \cdots , p_{1,D_1} \}, \hspace{20pt} p_{1,1} \le p_{1,2} \le  \cdots \le p_{1,D_1}, \hspace{20pt} \sum^{D_1}_{a=1} p_{1,a} = D_2.
  \eea
  The multiplicities of the parts in $p_1$ are defined to be
  \bea
  \label{eq:Multip1}
  	m_j(p_{1}) = \hbox{The number of occurrences of $j$ among the parts of $p_1$.}
  \eea
  Equivalently 
  \bea 
  m_j ( p_1 ) = \sum_{ a=1}^{ D_1} \delta ( j , p_{ 1,a} ) 
  \eea
  
  \item A composition $c_2$ of $D_3$ into $D_2$ parts. We write $c_2$ as an ordered list of positive integers
  \bea
  	c_2 = \{c_{2,1} , c_{2,2} , \cdots , c_{2,D_2}  \}, \hspace{20pt} \sum^{D_2}_{a=1} c_{2,a} = D_3.
  \eea
  \item Compositions $c_i$ of $D_{i+1}$ into $D_i$ parts for $2 \leq i \leq L-1$ also written as an ordered list of positive integers
  \bea
  	c_i = \{c_{i,1} , c_{i,2} , \cdots , c_{i,D_2}  \}, \hspace{20pt} \sum^{D_i}_{a=1} c_{i,a} = D_{i+1}.
  \eea
  Similarly to (\ref{eq:Multip1}), we can define
  \bea
  \label{eq:Multipi}
  	m_j(c_{i}) & =&  \hbox{The number of occurrences of $j$ among the parts of $c_i$.}\cr 
  	           & = & \sum_{ a =1}^{ D_i } \delta ( j , c_{ i , a } ) 
  \eea
  
  \item This data  $ \cD = \{ D_1 , \cdots, D_L ; p_1 , c_2 , \cdots , c_{ L-1} \} $ is used to  specify a  degeneracy graph. 
  
  \item The above data is used to specify a partition of the set $\{ 1, 2 , \cdots , D_2\} \equiv \left[ D_2 \right]$ into $D_1$ successive blocks $B^{(2)}_{a}$ of size $p_{1,a}$,
  \bea
  	\left[ D_2 \right] &=& B^{(2)}_{1} \sqcup B^{(2)}_{2} \cdots B^{(2)}_{D_1} , \nnm\\
	&&\hbox{cardinality of $B^{(2)}_{a}$} = \left| B^{(2)}_{a} \right| = p_{1,a}.
  \eea
  
  \item A partition of the set $ \left[ D_3 \right] \equiv \{ 1, 2 , \cdots , D_3\} $ into $D_2$ successive blocks $B^{(3)}_{a}$ of sizes $c_{2,a}$,
  \bea
  	\left[ D_3 \right] &=& B^{(3)}_{1} \sqcup B^{(3)}_{2} \cdots B^{(3)}_{D_2} , \nnm\\
	&&\hbox{cardinality of $B^{(3)}_{a}$} = \left| B^{(3)}_{a} \right| = c_{2,a}.
  \eea
\item More generally, we will partition the sets $\left[D_i\right]$, for $3 \leq i \leq L$, into successive blocks $B^{(i)}_{a}$ of sizes $c_{i-1,a}$, with $1 \leq a \leq D_{i-1}$,
 \bea\label{BlockDecomp} 
  	\left[ D_i \right] &=& B^{(i)}_{1} \sqcup B^{(i)}_{2} \cdots B^{(i)}_{D_{i-1}} , \nnm\\
	&&\hbox{cardinality of $B^{(i)}_{a}$} = \left| B^{(i)}_{a} \right| = c_{i-1,a}.
  \eea
\end{enumerate}

\subsection{Projectors and eigenvalue labels  } 
\label{sec:ProjEig}
 
By the Wedderburn-Artin decomposition theorem (e.g. \cite{CurtRein}), specialised to the commutative case, each algebra $ \cA_i$ has a basis of projectors $P_{a}^{ (i) }$ for $ a \in \{ 1 , 2, \cdots , D_i \} $ : 
\bea 
P^{(i)}_a P^{(i)}_{ b} = \delta_{ ab } P^{(i)}_a 
\eea
The projectors $ P^{(i)}_a$ are associated with the nodes at the $i$'th layer of the graph. The expansion of these projectors in terms of the projectors of $ \cA_{ i+1} $ is coded by the degeneracy graph. Thus 
\bea\label{Expitoipl1}  
P^{(i)}_a = \sum_{ b \in B^{(i+1)}_a   \subset [ D_{i+1} ]  } P^{(i+1)}_{ b } 
\eea 
where the blocks $B^{(i)}_a$ are determined by the sequence $  ( p_1 , c_2 , \cdots , c_{ L-1} )$ 
as described above in \eqref{BlockDecomp}.

Given a CASS algebra $ \cA_L$,  equipped with a chain of generator sub-algebras $ \cA_1 \rightarrow \cA_2 \cdots \rightarrow \cA_L$, each of the generators $ \cC_i $ is  a linear combination of 
projectors in $ \cA_L$, with coefficients $ \{  x^{(i)}_a :   1 \le a \le D_i \}  $ : 
\bea 
\cC_i = \sum_{ a =1 }^{ D_i }  x^{(i)}_a P^{(i)}_a 
\eea
and the projector relations imply that 
\bea 
\cC_i P^{(i)}_a = x^{(i)}_a P^{(i)}_a 
\eea
Each $P^{(i)}_a $ is a sum over a subset of  irreducible projectors $ P_I \in \cA_L$,  with coefficient  $1$.  For $ \cA_1$, we have 
\bea 
x^{(1)}_{b_1 } \ne x^{(1)}_{ b_2}   ~~~ \hbox{ for } ~~~ b_1 , b_2 \in [ D_1 ] ~~\hbox{and}~~  b_1  \ne b_2  
\eea
 Adding the generator $ \cC_2$ to $ \cC_1$ gives  $ \cA_2$ of dimension $D_2 > D_1$ and  the  distinct projectors in $ \cA_2 $ correspond to the $ D_2$ nodes of the layered graph at level $2$. Combining these facts with \eqref{Expitoipl1} gives the condition 
\bea 
 ~~~  \hbox{ For   all } a \in [ D_1] :  b_1 , b_2 \in B^{(2)}_{ a } \hbox{ and }  b_1 \ne b_2    \implies ~~~  x^{(2)}_{b_1}  \ne x^{(2)}_{b_2}    
\eea
More generally, for any $  2 \le i \le L $ 
\bea\label{nondegleaves}  
 \hbox{ For  all  } a \in [ D_{i-1} ]  : ~~  b_1 , b_2 \in B^{(i)}_{ a } \hbox{ and }  b_1 \ne b_2   \implies x^{(i)}_{b_1}  \ne x^{(i)}_{b_2}  
\eea
These inequalities on the eigenvalues at each layer ensure that adding the successive combinatorial generators $ \cC_i$ to the generating set produces the sequence $ \cA_1 \rightarrow \cdots  \rightarrow \cA_L $ of  CASS algebras of increasing dimension, with increasing refinement of the projectors described by the graph.

{\bf Summary} The integers $D_i$ give the number of distinct eigenvalue lists for the first $i$ generators. The partitions and compositions encode how degeneracies split when a new generator is adjoined. Together this data determines a finite layered tree which we call the degeneracy graph. The nodes at each layer are projectors in $\cA_i$ determined by  the eigenvalue list  of the first $i$ generators.

\subsection{ The  CASS-algebra in the monomial basis from the degeneracy graph } 
In the main conjecture  \eqref{mainconj} of section \ref{sec:monombasis} we give the conjectured form of a basis set 
 monomials in the generators $ \cC_{ 1}, \cdots , \cC_L $ for $ \cA_L $. 
A labelled version of the degeneracy graph carries variables $x^{(i)}_a$ at each node. These are  distinct  eigenvalues of $ \cC_i$ when applied to the projector $ P^{(i)}_a$:
\bea 
\label{eigvaleqforClassSum}
\cC_i P^{(i)}_a = x^{ (i)}_a P^{(i)}_a
\eea
The labelled versions of the graphs in \eqref{examp1} and \eqref{examp2} are in Appendix \ref{sec:AppCode}. 
Any  specified monomial  
\bea 
\cC_1^{ m_1} \cC_2^{m_2} \cdots \cC_L^{ m_L } = \prod_{ i =1}^{ L } \cC_i^{ m_i } 
\eea
can be evaluated on the projectors of $\cA_L$, using the block decompositions of projectors specified by the sequence $ ( p_1 , c_2 , \cdots , c_{L_1} )$. The  equation for the eigenvalues is given below in \eqref{MonomEigs}. It is convenient to define 
\bea 
 \bcC^{ \bm } = \prod_{ i =1}^{ L } \cC_i^{ m_i } 
\eea
Thus 
\bea 
\label{eq:EigenvalEqforMonos}
 && \bcC^{ \bm } P_I  =   \cM_{ I ,  \bm } P_I  \cr 
 &&  \bcC^{ \bm }  = \sum_{I=1}^{D_L}  \cM_{ I ,  \bm } P_I \cr 
 && P_I = \sum_{\bm } \cM^{-1}_{\bm,  I} \bcC^{ \bm } 
\eea
Importantly as $ \bm $ runs over the set of monomials $   \mathbf {  \cS ( \cA_1 \rightarrow \cA_2 \rightarrow \cdots \rightarrow \cA_L  )  }$ specified in \eqref{mainconj} and 
$ I $ runs over the nodes of the degeneracy graph at the final layer, the conjecture states that the matrix is invertible.

The CASS-algebra structure which is manifest in the projector basis of $\cA_L$, as described in \eqref{prodProj} \eqref{expI} \eqref{pairing}, can be expressed in terms of the monomial basis using the change of basis matrix $  \cM_{ I ,  \bm }$. The product is given by 
\bea 
 \bcC^{ \bm^{(1)}  } \cdot  \bcC^{ \bm^{(2)}  }   
  = \sum_{ I =1}^{ D_L } \sum_{ \bm^{(3)} \in  \mathbf { \cS ( \cD )  } } 
     \cM_{ I , \bm^{(1)} } \cM_{ I , \bm^{(2)} } \cM^{-1}_{  \bm^{(3)} . I  }  
     \bcC^{ \bm^{(3)}  }   
\eea
The trace-pairing is 
\bea 
\langle \bcC^{ \bm^{(1)}   } , \bcC^{  \bm^{(2)}   } \rangle  
 = \sum_{ I =1}^{ D_2 } \cM_{ I, \bm^{(1)} }  \cM_{ I, \bm^{(2)}  }
\eea
And the unit is 
\bea 
\mathbf{1} = \sum_{ I  =1 }^{ D_L }  \cM^{-1}_{ \bm , I  } \cC^{ \bm } 
\eea

%%%%%%%%%%%%%%%%%%%%%%%%%%%%%%%%%%%%%%%%%%%%%%%%%%%%%%%%%%%%%%%%%%%%%%%%%%%%%%%%%%%%%%%%%%%%%%%%%%%%%

%%%%%%%%%%%%%%%%%%%%%%%%%%%%%%%%%%%%%%%%%%%%%%%%%%%%%%%%%%%%%%%%%%%%%%%%%%%%%%%%%%%%%%%%%%%%%%%%%%%%%

\section{A  monomial basis for $  \cA_L$ from degeneracy graph }
\label{sec:monombasis} 

In this section, we propose a formula  \eqref{mainconj} for a  monomial basis of $ \cA =\cA_L $, which uses the sequence of sub-algebras $ \cA_1 \rightarrow \cdots \rightarrow \cA_L $ along with the associated degeneracy graph, labelled with eigenvalues of the generators $ \cC_1 , \cC_2 , \cdots ,\cC_L$. The sequence of sub-algebras, their relation  to degeneracy graphs, and the systematic generation of degeneracy graphs was described in section \ref{sec:degengraphs}. An important building block for the basis is the definition of subsets  $ \cS^{(i)}_{\left[ d_{i+1},d_{i+2} , \cdots ,d_L \right]} \subset \cA_i  $. These are nodes in the $i$'th layer of the degeneracy graph which obey a bound on the numbers of links connecting them to nodes in higher layers labelled by  $j$ with  $ ( i+1) \le  j \le L $.  The set $\cS^{(1)}_{\left[ d_2 , d_3 , \cdots , d_L \right]}$ will play a crucial role in the definition  of the monomial basis. 

\subsection{Definition of $\cS^{(i)}_{\left[ d_{i+1}  , d_{i+2}  , \cdots , d_L \right]}$}
\label{sec:defS1}

Given any $I \in \left[ D_L \right]$, the block decomposition of $\left[D_L\right]$ given by the composition  $c_{ L-1}$, specifies a block $B^{(L)}_{a}$, such that $I \in B^{(L)}_{a}$. Recall that block $B^{(L)}_{a}$ has size $c_{L-1,a}$ for some $a \in \left[ D_{L-1} \right]$. Informally, we say that  $a$ is the parent of $I$ and $c_{L-1,a}$ is the number of daughters of $a$ in $\left[ D_L \right]$ : in the degeneracy graph the node  $I \in [D_L]$ is connected to the node $a \in [ D_{ L-1}]$, $a$  is connected to a total of $ c_{ L-1,a}$ nodes in $[D_L]$.   We define
\bea
	\cS^{(L-1)}_{\left[ d_L \right]} = \{ a \in \left[ D_{L-1} \right] \hbox{ such that } c_{L-1,a} \geq d_L \}.
\eea
This definition gives the set of nodes in $\left[ D_{L-1} \right]$ with $d_L$, or more, daughters. Next we define $\cS^{(L-2)}_{\left[ d_{L-1} , d_L \right]} \subset \left[ D_{L-2} \right]$ by
\bea
	\cS^{(L-2)}_{\left[ d_{L-1} , d_L \right]} = \left\{ a \in \left[ D_{L-2} \right] \hbox{ such that } \left| B^{(L-1)}_{a} \cap \cS^{(L-1)}_{\left[d_L\right]}  \right| \geq d_{L-1} \right\}.
\eea
In general, for all $1 \leq i \leq L-2$, we define $\cS^{(i)}_{\left[ d_{i+1} , d_{i+2} ,\cdots , d_L \right]} \subset \left[ D_i \right]$ by
\bea
\label{eq:cSi}
	\cS^{(i)}_{\left[ d_{i+1},d_{i+2} , \cdots ,d_L \right]} = \left\{ a \in \left[ D_{i} \right] \hbox{ such that } \left| B^{(i+1)}_{a} \cap \cS^{(i+1)}_{\left[ d_{i+2} , \cdots , d_L\right]}  \right| \geq d_{i+1} \right\}.
\eea
The special case $\cS^{(1)}_{\left[ d_2 , d_3 , \cdots , d_L \right]} \subset \left[ D_1 \right]$ is used in defining the monomials. To shorten the notation, we will often use $[\vec{d}] = [d_2 , d_3 , \cdots , d_L]$. 

Equivalently, in words, 
\bea
	\cS^{(1)}_{\left[d_2,d_3 , \cdots , d_{L}\right]} &=& \hbox{the set of all vertices in layer $1$ having $d_2$, or more, daughters in layer $2$,}\nonumber\\
						&& \hbox{each of which have $d_3$, or more, daughters in layer $3$,   continuing until }\nonumber\\
						&& \hbox {the layer $(L-1)$ where each of vertices in layer  $(L-1)$}\nonumber\\
						&& \hbox{have $d_L$, or more, daughters in layer $L$.   }
\label{eq:DefSAL}
\eea

%%%%%%%%%%%%%%%%%%%%%%%%%%%%%%%%%%%%%%%%%%%%%%%%%%%%%%%%%%%%%%%%%%%%%%%%%%%%%%%%%%%%%%%%%%%%%%%%%%%%%

\subsection{Monomial basis  conjecture}
\label{sec:mainconj} 

We define $ {\rm Monom } ( d_2 , d_3 , \cdots , d_L ) $ as a  set of monomials 
\bea
\label{eq:MonoBasisforAL}
{\rm \bf Monom } ( d_2 , d_3 , \cdots , d_L ) = 	\{1, \cC_1 , \cC^2_1 , \cdots ,  \cC^{|\cS^{(1)}_{\left[\vec{d}\right]}|-1}_{1} \} \times \cC^{d_2-1}_{2}\cC^{d_3-1}_{3} \cdots \cC^{d_{L}-1}_{L} . 
\eea
where $ \cS^{(1)}_{\left[\vec{d}\right]}  $ is defined above as a special case of  \eqref{eq:cSi} and equivalently described in \eqref{eq:DefSAL}. 

Our main conjecture is the following.\\
\noindent 
{\bf Monomial Basis Conjecture: }A basis of $ \cA_L$ is the disjoint union 
\bea\label{mainconj}
\cS ( \cA_1 \rightarrow \cdots \rightarrow \cA_L ) \equiv  \bigsqcup_{ d_2 , d_3 , \cdots , d_{ L  } }  {\rm \bf  Monom } ( d_2 , d_3 , \cdots , d_L ) 
\eea
where $ d_i \in \{ 1 , \cdots , d_{i}^{\max} \}  $. The  number of  monomials in  (\ref{eq:MonoBasisforAL}) for each $\vec{d}$ is equal to $\left| \cS^{(1)}_{\left[ \vec{d} \right]} \right|$. The total number of monomials in $\cS ( \cA_1 \rightarrow \cdots \rightarrow \cA_L ) $   is therefore
\bea
\label{eq:TotNumofMonincAL}
	\sum^{d^{\max}_{L}}_{d_L = 1}  \cdots \sum^{d^{\max}_{3}}_{d_{3}=1} \sum^{d^{\max}_{2}}_{d_2 = 1} \left| \cS^{(1)}_{\left[ \vec{d} \right]} \right| 
\eea
For the special case $L=1$, with an algebra of dimension $D_1$, equivalently a graph with a single layer with $ D_1$ nodes, the monomial basis is 
\bea
	\{1, \cC_1 , \cC^2_1 , \cdots ,  \cC^{  D_1 - 1  }  \}.
\eea
The projectors can be written as linear combinations of these monomials, as we will recall in section \ref{sec:const1}. In this case, denoting the  the $D_1$ eigenvalues of $ \cC_1 $ as $ \{ x_a : 1 \le a \le D_1  \} $, the change of basis matrix relating the monomials to the projectors is a standard Vandermonde matrix with matrix entries $ x_a^{ i  } $ with  $ 0 \le a \le D_1 -1  $ 

%We refer to the set of monomials in \eqref{mainconj} as $ \cS ( \cA_1 \rightarrow \cA_2 \rightarrow \cdots \rightarrow \cA_L ) $.\\

%\subsection{  Remark  }

\section{ Counting proof for general degeneracy graphs  }
\label{sec:Counting} 

In this section, we prove that the total number of monomials in $ \cS ( \cA_1 \rightarrow \cdots \rightarrow \cA_L ) $ (defined in \eqref{mainconj}) is equal  to $D_L$, which is  the dimension of $\cA_L$ and the number of nodes in the $L$'th layer of the degeneracy graph.   
\begin{proposition}\label{GenCountProp}
	\bea
\label{eq:CountingforDL}
	\sum^{d^{max}_{L}}_{d_L = 1}  \cdots \sum^{d^{max}_{3}}_{d_{3}=1} \sum^{d^{max}_{2}}_{d_2 = 1} \left| \cS^{(1)}_{\left[ \vec{d} \right]} \right| = D_L . 
	\eea
\end{proposition}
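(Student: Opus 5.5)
The plan is to prove \eqref{eq:CountingforDL} by induction on the layers, using the elementary identity that for any finite multiset of positive integers $\{n_a\}$,
\bea
\sum_{d \ge 1} \left| \{ a : n_a \ge d \} \right| = \sum_a n_a ,
\eea
where the sum over $d$ may be taken up to any $d^{\max}$ at least $\max_a n_a$, since the terms with larger $d$ vanish. This is just counting the boxes of a Young diagram by columns rather than by rows. We interpret $d_i^{\max}$ as the largest value of $d_i$ for which the relevant set can be nonempty; any larger range only adds zero terms, so the choice of upper limits is harmless.

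The key step is to introduce, for each layer $i$ and each node $a\in[D_i]$, the number $N^{(i)}(a)$ of descendants of $a$ in layer $L$. We show by downward induction on $i$ that for every $a\in[D_i]$,
\be
\sum_{d_{i+1},\dots,d_L} \mathbbm{1}\left[ a \in \cS^{(i)}_{[d_{i+1},\dots,d_L]} \right] = N^{(i)}(a).
\ee
For the base case $i=L-1$, the left side is $\sum_{d_L}\mathbbm{1}[c_{L-1,a}\ge d_L]=c_{L-1,a}=N^{(L-1)}(a)$. For the inductive step, fix $d_{i+2},\dots,d_L$ and apply the identity to $n_a = |B^{(i+1)}_a\cap \cS^{(i+1)}_{[d_{i+2},\dots,d_L]}|$ by summing over $d_{i+1}$. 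By \eqref{eq:cSi}, this gives
\be
\sum_{d_{i+1}} \mathbbm{1}\left[a\in \cS^{(i)}_{[d_{i+1},\dots]}\right] = \sum_{b\in B^{(i+1)}_a} \mathbbm{1}\left[ b\in \cS^{(i+1)}_{[d_{i+2},\dots,d_L]} \right].
\ee
Summing the remaining $d$'s and using the inductive hypothesis yields $\sum_{b\in B^{(i+1)}_a} N^{(i+1)}(b)=N^{(i)}(a)$, because the blocks $B^{(i+1)}_a$ are exactly the daughters of $a$, as in \eqref{Expitoipl1}.

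Taking $i=1$ and summing over $a\in[D_1]$, the left side of \eqref{eq:CountingforDL} becomes $\sum_{a\in[D_1]} N^{(1)}(a)$. This equals $D_L$, since the block decompositions at every layer partition $[D_L]$ into the descendant sets of the layer-1 nodes. For $L=2$ the argument reduces directly to $\sum_{d_2}|\{a: p_{1,a}\ge d_2\}|=\sum_a p_{1,a}=D_2$.

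The main point needing care is interchanging the order of the nested sums so that each $d_{i+1}$ is summed while the deeper indices are held fixed. This is legitimate because all the sums are finite. One also has to be consistent about the upper limits $d_i^{\max}$: if the paper intends $d_i^{\max}=\max_a c_{i-1,a}$, or $\max_a p_{1,a}$ when $i=2$, then this bound dominates every $n_a$ appearing at that layer, since each $n_a$ is at most a block size. So the truncation never cuts off a nonzero term.
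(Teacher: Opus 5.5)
Your proof is correct and is essentially the paper's argument. The paper also removes one summation index at a time, applying Lemma 1 (counting by rows versus by columns) to a truncated two-layer graph. In your indexing this yields $\sum_{l_{i+1}}|\cS^{(i)}_{[l_{i+1},d_{i+2},\ldots,d_L]}| = |\cS^{(i+1)}_{[d_{i+2},\ldots,d_L]}|$, which is precisely your displayed sum over $d_{i+1}$ summed over all $a\in[D_i]$; the paper then chains these identities and finishes with Lemma 1 at the last layer. Your per-node bookkeeping with the descendant counts $N^{(i)}(a)$ and indicator functions replaces the truncated-graph device, and it makes the interchange of sums and the harmlessness of the upper limits $d_i^{\max}$ explicit.
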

Definitions (\ref{eq:Multip1}), (\ref{eq:Multipi}) will be useful in proving Proposition \ref{GenCountProp}. Using these definitions we can write the dimensions of the layers in the graph in terms of the multiplicities. 
\bea
	D_1 =  \sum\limits^{d^{max}_{2}}_{l=1} m_{l}(p_{1}) \\ 
	D_2 = \sum\limits^{d^{max}_{2}}_{l=1} l m_{l}(p_{1})
\eea
and for $i \geq 2$,
\bea
\label{eq:DimAi}
	D_i &=& \sum\limits^{d^{max}_{i+1}}_{l=1} m_{l}(c_i)\\
\label{eq:DimAip1}
	D_{i+1} &=& \sum\limits^{d^{max}_{i+1}}_{l=1} l m_{l}(c_i). 
\eea
We will also make use of definition (\ref{eq:cSi}). 

\subsection{$\cA_{1}$}

This is a special case in the monomial basis conjecture of section \ref{sec:monombasis} where the monomials are directly specified as 
$\{ 1 , \cC_1 , \cdots , \cC_1^{ D_1 - 1 } \}$ and the count is equal to $D_1$, the dimension of $ \cA_1$. 
\begin{comment} 
Beginning with the case of constructing projectors in $\cA_1$, our graph consists of a single layer. When $d_2 = 1$, the quantity $\left| \cS^{(1)}_{\left[ 1 \right]} \right|$, according to conjecture \ref{mainconj} is the number of $\cC_1$ monomials needed to construct all projectors in $\cA_1$. $\left| \cS^{(1)}_{\left[ 1 \right]} \right|$ is also equal to the number of nodes in $\cA_1$ having $1$, or more, daughters in $\cA_2$. But this is simply $D_1$. Thus, the total number of $\cC_1$ monomials used in constructing the projectors in $\cA_1$ is
\bea
	\left| \cS^{(1)}_{\left[ 1 \right]} \right| = D_1.
\eea
\end{comment}

\subsection{$\cA_{1} \rightarrow \cA_2$}

We now prove Proposition \ref{GenCountProp} for the case of $L=2$.
\begin{lemma}\label{lemma1}
For the case of $L = 2$, 
	\bea
	\sum^{d^{max}_{2}}_{d_2 = 1} \left| \cS^{(1)}_{\left[ d_2 \right]} \right| = D_2. 
	\eea
\end{lemma}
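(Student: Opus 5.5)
The plan is a double-counting (layer-cake) argument, exchanging the order of summation. For $L=2$ the definition \eqref{eq:cSi}, specialised as in the definition of $\cS^{(L-1)}_{[d_L]}$ with $L=2$, reads
\begin{equation*}
\cS^{(1)}_{[d_2]} = \{ a \in [D_1] \hbox{ such that } p_{1,a} \geq d_2 \},
\end{equation*}
so that $|\cS^{(1)}_{[d_2]}| = \sum_{a=1}^{D_1} \theta( p_{1,a} \ge d_2 )$, where $\theta$ denotes the indicator of the stated condition. Here $d_2^{\max}$ is the largest part of $p_1$, i.e. the maximal number of daughters of a layer-1 node, consistent with its use in the expressions for $D_1, D_2$ in terms of multiplicities.

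First, substitute this indicator expression into the left-hand side and interchange the two finite sums:
\begin{equation*}
\sum_{d_2=1}^{d_2^{\max}} |\cS^{(1)}_{[d_2]}| = \sum_{a=1}^{D_1} \sum_{d_2=1}^{d_2^{\max}} \theta( p_{1,a} \ge d_2 ) .
\end{equation*}
Second, observe that for fixed $a$ the inner sum counts the integers $d_2$ with $1 \le d_2 \le \min(p_{1,a}, d_2^{\max}) = p_{1,a}$. Here we use $1 \le p_{1,a} \le d_2^{\max}$, which holds because the parts of $p_1$ are positive and bounded by the largest part. The inner sum therefore equals $p_{1,a}$. Third, summing over $a$ gives $\sum_a p_{1,a} = D_2$, since $p_1$ is a partition of $D_2$.

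Equivalently, in the language of multiplicities, $|\cS^{(1)}_{[d_2]}| = \sum_{l \ge d_2} m_l(p_1)$. Exchanging sums then gives $\sum_l l\, m_l(p_1)$, which equals $D_2$ by the multiplicity formula for $D_2$ stated above. I would present this version, since it matches the notation set up for Proposition \ref{GenCountProp} and generalises directly to the inductive step for higher $L$.

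There is no real obstacle here. The only point requiring care is the identification of $d_2^{\max}$ with the largest part of $p_1$, which ensures that the truncation of the $d_2$ sum loses no terms.
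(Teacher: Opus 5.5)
Your proof is correct, and the multiplicity form you say you would present is exactly the paper's argument: write $|\cS^{(1)}_{[d_2]}| = \sum_{l \ge d_2} m_l(p_1)$, reverse the order of summation to get $\sum_l l\, m_l(p_1)$, and identify this with $D_2$. The indicator version over nodes $a \in [D_1]$ is the same exchange of sums, organised by node rather than by part size.
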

\noindent 
\textbf{Proof:}\\
Recall that $\cS^{(1)}_{\left[d_2\right]}$ is the set of nodes in $\cA_1$ that have $d_2$, or more, daughters in $\cA_2$. Thus, we have the identity 
\bea
\label{eq:IdenSd2}
	\left| \cS^{(1)}_{\left[d_2\right]} \right| = \sum^{d^{max}_{2}}_{l = d_2} m_{l}(p_1).
\eea
It follows that 
\begin{comment}
The LHS is also equal to the number of new monomials introduced at stage $d_2$. Thus, summing $\left| \cS^{(1)}_{\left[d_2\right]} \right|$ over $d_2$ from 1 to $d^{max}_{2}$ will give the total number of monomials used to construct the projectors in $\cA_2$; %From (\ref{eq:IdenSd2}), 
\end{comment}
\bea
	\sum^{d^{max}_{2}}_{d_2=1}\left| \cS^{(1)}_{\left[d_2\right]} \right| &=& \sum^{d^{max}_{2}}_{d_2=1} \sum^{d^{max}_{2}}_{l = d_2} m_{l}(p_1).
\eea
Reversing the order of the summations on the RHS,
\bea
\sum^{d^{max}_{2}}_{d_2=1}\left| \cS^{(1)}_{\left[d_2\right]} \right| &=&  \sum^{d^{max}_{2}}_{l=1} \sum^{l}_{d_2 = 1} m_{l}(p_1),\nnm \\
\label{eq:seclineA2counting}
				&=&  \sum^{d^{max}_{2}}_{l=1} l m_{l}(p_1).
\eea
From (\ref{eq:DimAip1}), the RHS of (\ref{eq:seclineA2counting}) is simply $D_2$, and Lemma \ref{lemma1} is proved. Note that we can use Lemma \ref{lemma1} at any layer in the graph. For instance, at layer $\cA_{i-1}$, beginning from the observation analogous to (\ref{eq:IdenSd2})
\bea
	\left| \cS^{(i-1)}_{\left[d_i\right]} \right| = \sum^{d^{max}_{i}}_{l=d_i} m_{l}(c_{i-1}),
\eea
and following the same steps as above we arrive at
\bea
	\sum^{d^{max}_{i}}_{d_i = 1} \left| \cS^{(i-1)}_{\left[d_i\right]} \right|  = D_{i}.
\eea

\subsection{$\cA_{1} \rightarrow \cA_2 \rightarrow \cA_3$}

\begin{lemma}
For the case of $L = 3$, 
	\bea\label{lem2}
	\sum^{d^{max}_{3}}_{d_3 = 1} \sum^{d^{max}_{2}}_{d_2 = 1} \left| \cS^{(1)}_{\left[ d_2,d_3 \right]} \right| = D_3.
	\eea
\end{lemma}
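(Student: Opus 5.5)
The plan is to reduce the $L=3$ statement \eqref{lem2} to two applications of Lemma \ref{lemma1}: first at the layer pair $(1,2)$ with a modified degeneracy, then at the layer pair $(2,3)$. The key observation is that, for fixed $d_3$, the set $\cS^{(1)}_{[d_2,d_3]}$ is exactly the $L=2$ type set for a reduced tree. In this tree layer $2$ is replaced by the subset $\cS^{(2)}_{[d_3]} \subset [D_2]$ of layer-$2$ nodes having at least $d_3$ daughters.

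Concretely, I fix $d_3$ and define, for each $a \in [D_1]$, the number
\begin{equation*}
n_a(d_3) = \left| B^{(2)}_a \cap \cS^{(2)}_{[d_3]} \right| .
\end{equation*}
By definition \eqref{eq:cSi}, $\cS^{(1)}_{[d_2,d_3]} = \{ a \in [D_1] : n_a(d_3) \ge d_2 \}$. The argument of Lemma \ref{lemma1} then applies verbatim, with the parts $p_{1,a}$ replaced by $n_a(d_3)$. The one difference is that some $n_a(d_3)$ may be $0$, but such $a$ simply never contribute, since $d_2 \ge 1$. Writing $|\cS^{(1)}_{[d_2,d_3]}| = \sum_{a} \mathbbm{1}[n_a(d_3) \ge d_2]$ and swapping the order of summation gives
\begin{equation*}
\sum_{d_2=1}^{d_2^{\max}} \left|\cS^{(1)}_{[d_2,d_3]}\right| = \sum_{a\in[D_1]} \sum_{d_2=1}^{d_2^{\max}} \mathbbm{1}[n_a(d_3)\ge d_2] = \sum_{a} n_a(d_3) .
\end{equation*}
Here I need $d_2^{\max} \ge \max_a p_{1,a} \ge n_a(d_3)$, so that truncating the sum at $d_2^{\max}$ loses nothing. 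I would state explicitly that $d_2^{\max}$ is the largest part of $p_1$, or note that larger $d_2$ give empty sets.

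Since the blocks $B^{(2)}_a$ partition $[D_2]$, the right-hand side equals $|\cS^{(2)}_{[d_3]}|$. Summing over $d_3$ and using the layer-$(i-1)$ version of Lemma \ref{lemma1} noted after its proof, $\sum_{d_3=1}^{d_3^{\max}} |\cS^{(2)}_{[d_3]}| = D_3$, gives \eqref{lem2}.

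The only delicate points are bookkeeping. First, the summation ranges $d_i^{\max}$ must be large enough that every indicator is captured; they are the maximal parts of $p_1$ and $c_2$ respectively. Second, the reduced "partition" $(n_a(d_3))_a$ may have zero parts and need not be weakly ordered, so I would phrase the Lemma \ref{lemma1} step via indicator functions rather than via the multiplicities $m_l$. This formulation also generalises directly to an induction on $L$ for Proposition \ref{GenCountProp}.
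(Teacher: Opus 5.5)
Your proof is correct and follows the paper's route. Your reduced tree with $n_a(d_3)$ daughters per layer-1 node is exactly the paper's truncated graph $\cG_{(2)}(d_3)$, and, like the paper, you prove the intermediate identity $\sum_{d_2}|\cS^{(1)}_{[d_2,d_3]}| = |\cS^{(2)}_{[d_3]}|$, then sum over $d_3$ and apply Lemma \ref{lemma1} at layers $2\to 3$. Your indicator-function phrasing also explicitly handles layer-1 nodes left with zero surviving daughters, a point the paper's appeal to Lemma \ref{lemma1} on the truncated graph passes over.
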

\noindent 
\textbf{Proof:}\\
We make use of the identity 
\bea
\label{eq:IdenSd2d3}
	\sum^{d^{max}_{2}}_{l=1} \left| \cS^{(1)}_{\left[ l ,  d_3\right]} \right| = \left| \cS^{(2)}_{\left[ d_3 \right]} \right| .
\eea
%Equation (\ref{eq:IdenSd2d3}) says that summing $\left| \cS^{(1)}_{\left[ l ,  d_3\right]} \right|$ over $l$ from 1 to $d^{max}_{2}$ for a fixed $d_3$ gives the number of nodes in $\cA_2$ having $d_3$, or more, daughters in $\cA_3$.
To prove (\ref{eq:IdenSd2d3}), we can visualize a two-layer truncated graph in which all nodes in $\cA_2$ having links to fewer than $d_3$ daughters in $\cA_3$ are dropped. The links of these dropped nodes to their parent nodes in $\cA_1$ are also dropped. Denote this truncated graph by $\cG_{(2)}(d_3)$. Let $D_{2}(\cG_{(2)}(d_3))$ be the total number of nodes in $\cA_2$ in $\cG_{(2)}(d_3)$. Then, by definition 
\bea
\label{eq:D2truncated}
	D_{2}(\cG_{(2)}(d_3)) = \left| \cS^{(2)}_{\left[d_3\right]} \right|.
\eea
Define $\cS^{(1)}_{\left[ l \right]} \left( \cG_{(2)}(d_3 \right)$ to be the set $ \cS^{(1)}_{\left[ l \right]}$ evaluated on the truncated graph $ \cG_{(2)}(d_3)$. Then from lemma \ref{lemma1}, 
\bea
	\sum^{d^{max}_{2}}_{l=1} \left| \cS^{(1)}_{\left[ l \right]} \left( \cG_{(2)}(d_3 \right) \right| = D_{2}\left( \cG_{(2)}(d_3) \right).
\eea
Using (\ref{eq:D2truncated}), we arrive at
\bea
\label{eq:SumS1trancatedequalcS2d3}
	\sum^{d^{max}_{2}}_{l=1} \left| \cS^{(1)}_{\left[ l \right]} \left( \cG_{(2)}(d_3 \right) \right| =  \left| \cS^{(2)}_{\left[d_3\right]} \right|.
\eea
By construction, the number of nodes in $\cA_1$ having $l$, or more, daughters in $\cA_2$ for $\cG_{(2)}(d_3)$ is equivalent to the number of nodes in $\cA_1$ having $l$, or more daughters in $\cA_2$, each of which have $d_3$, or more, daughters in $\cA_3$ for the original graph:
\bea
\label{eq:byconstructcS1}
	 \left| \cS^{(1)}_{\left[ l \right]} \left( \cG_{(2)}(d_3 \right) \right| =  \left| \cS^{(1)}_{\left[ l , d_3 \right]} \right| .
\eea
Combining (\ref{eq:byconstructcS1}) with (\ref{eq:SumS1trancatedequalcS2d3}), we arrive at (\ref{eq:IdenSd2d3}).\\

Summing over $d_3$ from 1 to $d^{max}_{3}$ in (\ref{eq:IdenSd2d3}) gives
\bea
	\sum^{d^{max}_{3}}_{d_3 = 1}\sum^{d^{max}_{2}}_{l = 1}\left| \cS^{(1)}_{\left[ l , d_3  \right]} \right| = \sum^{d^{max}_{3}}_{d_3 = 1}  \left| \cS^{(2)}_{\left[d_3\right]} \right|
\eea
We can apply lemma \ref{lemma1} to the RHS to obtain $D_3$. This completes the proof of Lemma 2 (eqn. \eqref{lem2}).

\subsection{$\cA_{1} \rightarrow \cA_2  \rightarrow \cdots  \rightarrow \cA_L$}

We now prove proposition \ref{GenCountProp}.

\textbf{Proof}
We make use of the identity 
\bea
\label{eq:IdenSd2d3dL}
	\sum^{d^{max}_{L-1}}_{l_{L-1}=1} \cdots \sum^{d^{max}_{3}}_{l_3=1}\sum^{d^{max}_{2}}_{l_2=1}\left| \cS^{(1)}_{\left[ l_2 ,  l_3 , \cdots , l_{L-1}, d_L\right]} \right| = \left| \cS^{(L-1)}_{\left[ d_L \right]} \right| .
\eea
To prove (\ref{eq:IdenSd2d3dL}), we define the following two-layer truncated graph $\cG_{(2)}(d_3 , \cdots , d_L)$. In $\cG_{(2)}(d_3 , \cdots , d_L)$ all nodes in $\cA_2$ that do \emph{not} have $d_3$, or more, daughters in $\cA_3$, each of which have $d_4$, or more, in $\cA_4$ etc are dropped. The links of these dropped $\cA_2$ nodes to their parent nodes in $\cA_1$ are also dropped. Let $D_2\left( \cG_{(2)}(d_3 , \cdots , d_L) \right)$ be the total number of nodes in $\cA_2$ for the graph $\cG_{(2)}(d_3 , \cdots , d_L)$
\begin{comment}\footnote{ More specifically, this is the number of nodes in $\cA_2$ having $d_3$, or more daughters in $\cA_3$, each of which have $d_4$, or more, nodes in $\cA_4$ etc.}.
\end{comment}
 Then by definition,
\bea
\label{eq:D2forGd3d4etc}
	D_2\left( \cG_{(2)}(d_3 , \cdots , d_L) \right) = \left| \cS^{(2)}_{\left[ d_3 , d_4 , \cdots , d_L \right]} \right|. 
\eea
Combining Lemma \ref{lemma1} with (\ref{eq:D2forGd3d4etc}) gives
\bea
\label{eq:Lemma1layer12}
	\sum^{d^{max}_{2}}_{l_{2}=1} \left| \cS^{(1)}_{\left[l_2\right]} \left( \cG_{(2)}(d_3 , \cdots , d_L) \right) \right|  = D_2\left( \cG_{(2)}(d_3 , \cdots , d_L) \right) = \left| \cS^{(2)}_{\left[ d_3 , d_4 , \cdots , d_L \right]} \right|.
\eea
%Therefore,
%\bea
%	\sum^{d^{max}_{2}}_{l_{2}=1} \left| \cS^{(1)}_{\left[l_2\right]} \left( \cG_{(2)}(d_3 , \cdots , d_L) \right) \right| =  \left| \cS^{(2)}_{\left[ d_3 , d_4 , \cdots , d_L \right]} \right|.
%\eea
By construction, the quantity $\left| \cS^{(1)}_{\left[ l_2 \right]} \right|$ in the truncated graph $ \cG_{(2)}(d_3 , \cdots , d_L)$ is equivalent to $\left| \cS^{(1)}_{\left[ l_2 , d_3 , \cdots , d_L \right]} \right|$ in the original graph:
\bea 
	\left| \cS^{(1)}_{\left[l_2\right]} \left( \cG_{(2)}(d_3 , \cdots , d_L) \right) \right| =  \left| \cS^{(1)}_{\left[l_2 , d_3 , \cdots , d_L\right]} \right|.
\eea
Thus, applying this to (\ref{eq:Lemma1layer12}) gives
\bea
\label{eq:l2sum}
	\sum^{d^{max}_{2}}_{l_{2}=1}   \left| \cS^{(1)}_{\left[l_2 , d_3 , \cdots , d_L\right]} \right| = \left| \cS^{(2)}_{\left[ d_3 , d_4 , \cdots , d_L \right]} \right| .
\eea
Now consider a similar two-layer truncated graph $\cG_{(2)}( d_4 , \cdots , d_L)$ between $\cA_2$ and $\cA_3$. Denote the total number of nodes in $\cA_3$ in $\cG_{(2)}( d_4 , \cdots , d_L)$ by $D_3\left( \cG_{(2)}(d_4 , \cdots , d_L) \right)$. By definition, %Let the total number of nodes in $\cA_3$ having $d_4$, or more, daughters in $\cA_4$, each of which have $d_5$, or more, nodes in $\cA_5$ etc, is simply the total number of nodes in $\cA_3$ for $ \cG_{(2)}(d_4 , \cdots , d_L)$,
\bea
\label{eq:D3forTruncated}
	D_3\left( \cG_{(2)}(d_4 , \cdots , d_L) \right) =  \left| \cS^{(3)}_{\left[  d_4 , \cdots , d_L \right]} \right|.
\eea
Applying Lemma \ref{lemma1} to the truncated graph, together with (\ref{eq:D3forTruncated}), gives
\bea
\label{eq:cS2fortruncated}
	\sum^{d^{max}_{3}}_{l_3 = 1}\left| \cS^{(2)}_{\left[ l_3 \right]} \left(\cG_{(2)}( d_4 , \cdots , d_L\right)\right| = D_3 \left(\cG_{(2)}( d_4 , \cdots , d_L\right) =  \left| \cS^{(3)}_{\left[  d_4 , \cdots , d_L \right]} \right|.
\eea
By construction, $\cS^{(2)}_{\left[ l_3 \right]}$ for $ \cG_{(2)}(d_4 , \cdots , d_L) $ is equivalent to $\cS^{(2)}_{\left[ l_3 , d_4 , \cdots , d_L \right]}$ in the original graph, and thus, (\ref{eq:cS2fortruncated}) becomes
\bea
	\sum^{d^{max}_{3}}_{l_3 = 1} \cS^{(2)}_{\left[ l_3 , d_4 , \cdots , d_L \right]} =  \left| \cS^{(3)}_{\left[  d_4 , \cdots , d_L \right]} \right|.
\eea
Combining this result with (\ref{eq:l2sum}), we have 
\bea
	\sum^{d^{max}_{3}}_{l_3 = 1} \sum^{d^{max}_{2}}_{l_{2}=1}   \left| \cS^{(1)}_{\left[l_2 , l_3 , d_4 ,  \cdots , d_L\right]} \right|  =  \left| \cS^{(3)}_{\left[  d_4 , \cdots , d_L \right]} \right|.
\eea
At the $i$th step of this iteration, we can similarly define a truncated graph $\cG_{(2)}(d_{i+1} , \cdots , d_{L})$ in which nodes in $\cA_i$ \emph{not} having $d_{i+1}$, or more, daughters in $\cA_{i+1}$ etc are dropped. The links of these dropped nodes in $\cA_{i}$ to their parent nodes to $\cA_{i-1}$ are also dropped. Once again, we have
\bea
	D_{i} \left( \cG_{(2)}(d_{i+1} , \cdots , d_{L}) \right) = \left| \cS^{(i)}_{\left[ d_{i+1} , \cdots , d_{L} \right]} \right|.
\eea
Applying Lemma \ref{lemma1} to $ \cG_{(2)}(d_{i+1} , \cdots , d_{L})$ gives
\bea
\label{eq:Lemmaithite}
	\sum^{d^{max}_{i}}_{l_i=1} \left| \cS^{(i-1)}_{\left[ l_i \right]}   \left( \cG_{(2)}(d_{i+1} , \cdots , d_{L}) \right) \right|  =  D_{i} \left( \cG_{(2)}(d_{i+1} , \cdots , d_{L}) \right) = \left| \cS^{(i)}_{\left[ d_{i+1} , \cdots , d_{L} \right]} \right|.
\eea
Similarly, by construction, 
\bea
	 \left| \cS^{(i-1)}_{\left[ l_i \right]}   \left( \cG_{(2)}( d_{i+1} , \cdots , d_{L}) \right) \right| = \left|\cS^{(i-1)}_{\left[ l_i , d_{i+1} , \cdots , d_{L} \right]}\right|,
\eea	
which, when combined with (\ref{eq:Lemmaithite}), gives
\bea
	\sum^{d^{max}_{i}}_{l_i=1} \left|\cS^{(i-1)}_{\left[ l_i , d_{i+1} , \cdots , d_{L} \right]}\right| = \left| \cS^{(i)}_{\left[ d_{i+1} , \cdots , d_{L} \right]} \right|.
\eea
But using the results of previous iterations, $\cS^{(i-1)}_{\left[ l_i , d_{i+1} , \cdots , d_{L} \right]} $ can be written in terms of $\cS^{(1)}_{\left[ l_2 , l_3 , \cdots , l_i , \cdots ,  d_{L} \right]} $:
\bea
	\sum^{d^{max}_{i}}_{l_i=1} \cdots \sum^{d^{max}_{2}}_{l_2=1} \cS^{(1)}_{\left[ l_2 , \cdots , l_i , d_{i+1} , \cdots , d_{L} \right]} = \left| \cS^{(i)}_{\left[ d_{i+1} , \cdots , d_{L} \right]} \right|.
\eea
Letting $i = L-1$, we arrive at (\ref{eq:IdenSd2d3dL}). 

Finally, summing over $d_{L}$ from 1 to $d^{max}_{L}$ on both sides in (\ref{eq:IdenSd2d3dL}), and applying Lemma \ref{lemma1} on the RHS, we obtain (\ref{eq:CountingforDL}) and thus prove Proposition \ref{GenCountProp}. 
%Summing over $d_{i+1}$ from 1 to $d^{max}_{i+1}$, and applying Lemma \ref{lemma1} gives

%%%%%%%%%%%%%%%%%%%%%%%%%%%%%%%%%%%%%%%%%%%%%%%%%%%%%%%%%%%%%%%%%%%%%%%%%%%%%%%%%%%%%%%%%%%%%%%%%%%%%%%%%%%%%%%%%%%%%%%%%%%%%%%%%%%%%%%%%%%%%%%%%%%%%%%%%%%%%%%%%%%%%%%%%%%%%%%%%%%%%%%%%%%%%%%%%%%%%%%%%%%%%%%%%%%%%%%%%%%%

%%%%%%%%%%%%%%%%%%%%%%%%%%%%%%%%%%%%%%%%%%%%%%%%%%%%%%%%%%%%%%%%%%%%%%%%%%%%%%%%%%%%%%%%%%%%%%%%%%%%%%%%%%%%%%%%%%%%%%%%%%%%%%%%%%%%%%%%%%%%%%%%%%%%%%%%%%%%%%%%%%%%%%%%%%%%%%%%%%%%%%%%%%%%%%%%%%%%%%%%%%%%%%%%%%%%%%%%%%%%

\section{Proof by construction  for $L=1,2$ } 
\label{sec:PfConst} 

In this section we consider the case of one and two-layer  degeneracy graphs with any number of nodes, and prove for these cases that the projectors in the final layer can be written as linear combinations of the monomials specified by our main conjecture \eqref{mainconj}. 

\subsection{  The case $L=1$ }
\label{sec:const1} 

When $ \cA_1$ has dimension $ D_1$, it has a basis of $D_1$ projectors. These correspond to nodes in a 1-layer degeneracy graph, which we can label with $ \left[ D_1 \right] = \{ 1 , \cdots , D_1 \}$. In the $L=1$ case, $ \cA= \cA_1$ is generated by a single algebra element  $\cC_1$ with distinct eigenvalues. The nodes of the graph are labelled by the eigenvalues of $ \cC_1$ which are distinct.  To construct a specific projector in $\cA_1$ labeled by $J \in [ D_1 ] $, we use the well-known mathematical formula:
\bea\label{projprod}
	Q_{J}(\cC_1) = \prod\limits_{ \substack { i \in \left[ D_1 \right] \\ i \neq J }} \frac{\cC_1 - x^{(1)}_{i} }{ x^{(1)}_{J} - x^{(1)}_{i} }.
\eea
This product annihilates all projectors $P_{i}$ except for $i = J$. Further, when applied to $ P_i$, $\cC_1$ evaluates to $x^{(1)}_i$ and we have $Q_{J}(\cC_1) P_i = P_i$. We conclude that 
Thus,
\bea
	P_{J} = Q_{J}(\cC_1).
\eea
The degree in $\cC_1$ of $Q_{J}(\cC_1)$ is $\left[D_1\right] - 1$, and the monomials introduced in the construction of $\cA_1$ are
\bea
	\{ 1 , \cC_1 , \cC^2_1 , \cdots , \cC^{\left[D_1\right] - 1}_{1} \}. 
\eea
The formula \eqref{projprod} is used extensively  to show that character-distingishing conjugacy classes give non-linear generating sets \cite{KempRam,RS} and in  discussions of integrality of 2D TQFT constructions of representation theoretic quantities \cite{IDFCTS,RS,RowCol}. 
%We have verified in Mathematica from $n = 2$ to $n = 30$ that these monomials are independent. 

\subsection{The $L=2$ case : $\cA_1 \rightarrow \cA_2$ }

In this case, the CASS algebra $\cA$ as presented as minimally generated by a non-linear generating set of two  algebra elements $ \{ \cC_1 , \cC_2 \}$. 
 The set $\cS^{(1)}_{\left[ \vec{d} \right]}$ in definition (\ref{eq:DefSAL}) becomes $\cS^{(1)}_{\left[d_2\right]}$ - the set of vertices in $\cA_1$ with $d_2$, or more, daughters in $\cA_2$. 

We describe below an algorithm for construction of projectors in $\cA_2$ labeled by $b \in \left[ D_2 \right]$. The node $b$ also belongs to a block in $\left[D_2\right]$. Let
\bea
	b \in B^{(2)}_{a},
\eea
where $a \in \left[ D_1 \right]$ is the parent of $b$ and $a \in \cS^{(1)}_{\left[d_2\right]}$ and has exactly $d_2$  daughters in $\cA_2$. We describe an iterative procedure where the steps are labeled by $\left[ d_2  \right]$. We start at $\left[ d_2  \right] = \left[ 2 \right]$ and successively increase $d_2$ to its maximum $d^{max}_{2}$, which is the largest number of daughters of any vertex in $\cA_1$. Note that 
\bea
	d^{max}_{2} &=& \underset{d_2}{\hbox{Max}} \{ \left| \cS^{(1)}_{\left[d_2\right] } \right| > 0 \}.
\eea
At stage $\left[d_2\right]$ of this iterative procedure, we introduce monomials
\bea
\label{eq:MonosA1toA2staged3}
	\hbox{\bf Monom}(d_2) = \{ 1 , \cC_1 , \cdots , \cC^{|\cS^{(1)}_{\left[d_2\right]}| -1}_{1} \} \times  \cC^{d_2-1}_{2},
\eea
and construct all projectors labelled by $b$ whose parents in $\cA_1$ have $d_2$ daughters in $\cA_2$. We will use the following two projector-as-product operators to construct the projectors:
\bea
	Q_{a}(\cC_1) &=& \prod_{\substack{ a' \in \cS^{(1)}_{\left[d_2\right]} \\ a' \neq a }}  \frac{\cC_1 - x^{(1)}_{a'} }{ x^{(1)}_{a} - x^{(1)}_{a'} },\\
	Q_{b}(\cC_2) &=& \prod\limits_{\substack{ b' \in B^{(2)}_{a} \\  b' \neq b }}  \frac{\cC_2 - x^{(2)}_{b'} }{ x^{(2)}_{b} - x^{(2)}_{b'} }. 
\eea
The degree of $Q_{a}(\cC_1)$ is $\left| \cS^{(1)}_{\left[d_2\right] } \right|-1$, and the degree of $Q_{b}(\cC_2)$ is $d_2-1$. The $Q_{a}(\cC_1) $ annihilates all nodes in $\cA_1$ having $d_2$, or more, daughters in $\cA_2$ except for $a$, the parent of the node in $\cA_2$ we wish to construct. $Q_{a}(\cC_1)$ acting on a $P^{(1)}_{a'}$ for which $a' \in \left[ D_1 \right] \setminus \cS^{(1)}_{\left[d_2\right]}$ evaluates to $Q_{a}(x^{(1)}_{a'})P^{(1)}_{a'}$. Concretely, we can act with $Q_{a}(\cC_1)$ on the identity in $\cA_1$,
\bea
\label{eq:QT2A1toA2case}
	Q_{a}(\cC_1)1_{\cA_1} = Q_{a}(\cC_1)\sum\limits_{ a' \in \left[ D_1 \right]} P^{(1)}_{a'} = P^{(1)}_{a} + \sum\limits_{ a' \in \left[ D_1 \right] \setminus \cS^{(1)}_{\left[d_2\right]}} Q_{a}\left( x^{(1)}_{a'} \right)P^{(1)}_{a'}.
\eea
We can expand $P^{(1)}_{a}$ into its $d_2$ daughters in $\left[D_2\right]$:
\bea
	P^{(1)}_{a} = \sum\limits_{ \substack { b' \in B^{(2)}_{a} }} P^{(2)}_{b'}.
\eea
Acting with $Q_{b}(\cC_2)$ on equation (\ref{eq:QT2A1toA2case}) annihilates all of the $d_2$ daughters of $a$ except for $b$. Thus, (\ref{eq:QT2A1toA2case}) becomes
\bea
	Q_{b}(\cC_2)Q_{a}(\cC_1) = P^{(2)}_{b} + \sum\limits_{a' \in \left[ D_1 \right] \setminus \cS^{(1)}_{\left[d_2\right]}} Q_{a}\left( x^{(1)}_{a'} \right) Q_{b}(\cC_2) P^{(1)}_{a'}
\eea
The vertices $a' \in \left[D_1\right] \setminus \cS_{\left[d_2\right]}$ all have fewer than $d_2$ daughters in $\left[D_2\right]$. Thus, these corresponding projectors have already been constructed at earlier stages $d'_2 < d_2$. Thus, we can expand the $P^{(1)}_{a'}$ into their daughters in $\left[D_2\right]$, and the $Q_{b}(\cC_2)$ acting on these $\cA_2$ projectors may be evaluated:
\bea
	Q_{b}(\cC_2)Q_{a}(\cC_1) = P^{(2)}_{b} + \sum\limits_{ a' \in \left[D_1\right] \setminus \cS^{(1)}_{\left[d_2\right]}}  \sum\limits_{ \substack { b' \in B^{(2)}_{a'} }}  Q_{a}\left( x^{(1)}_{a'} \right) Q_{b}\left( x^{(2)}_{b'} \right) P^{(2)}_{b'}.\nonumber\\
\eea
We can now solve for the desired projector 
\bea
	P^{(2)}_{b}  = Q_{b}(\cC_2)Q_{a}(\cC_1)   -\sum\limits_{ a' \in \left[D_1\right] \setminus \cS^{(1)}_{\left[d_2\right]}}  \sum\limits_{ \substack { b' \in B^{(2)}_{a'} }}  \cV\left(P^{(2)}_{b'} \right).
\eea
where $\cV(P^{(2)}_{b'})$ denotes the subspace spanned by $P^{(2)}_{b'}$. The monomials introduced at stage $\left[d_2\right]$ are contained in $Q_{b}(\cC_2)Q_{a}(\cC_1)$ and are precisely those stated in (\ref{eq:MonosA1toA2staged3}).
 
 \vskip.2cm 
 
 \noindent 
\textbf{Remarks: }

\begin{itemize}
	
\item Centres of symmetric group algebras $ \cZ ( \mC( S_n ) )  $ for $n \in \{ 6,8, \cdots , 14 \}$ provide examples of this $ L=2$ case, by taking $ \cC_1 = T_2 $ to be the class sum of permutations with a single non-trivial cycle of length $2$, and $ \cC_2 = T_3 $ to be  the class sum of permutations with a single non-trivial cycle of length $3$. 
We have verified using mathematica that the monomials described here indeed span the centres of these algebras, showing that any projector can be written as a linear combination  of the monomials. 

\item A straightforward use of the projector as product formula \eqref{projprod} in the $L=2$ case, as in \cite{KempRam,RS}, only proves that monomials in the two generators provide spanning sets for  $ \cA= \cA_2$. 
 To illustrate this, we may consider an $\cA_1 \rightarrow \cA_2$ graph defined by $p_1 = (2,2,1,1)$. Here $D_1 = 4$ and $D_2 = 6$. Applying the projector-as-product formula to construct each projector in $\cA_2$ independently of any other $\cA_2$ projectors previously constructed will give 8 monomials in total $\{ 1 , \cC_1 , \cC^{2}_{1} , \cC^{3}_{1} \} \times \{ 1 , \cC_2 \}$. On the other hand  applying the construction of 
 \eqref{mainconj}, we have  $|\cS^{(1)}_{\left[1\right]}| = 4$, and $|\cS^{(2)}_{\left[2\right]}| = 2$ which yields $\cS(\cA_1 \rightarrow \cA_2 )$ to be the disjoint union of $\{ 1 , \cC_1 , \cC^2_1 , \cC^3_1  \}$ and $\{ 1 , \cC_1 \} \times \cC_2$.  

\end{itemize}

\section{ Graphs and determinants } 
\label{sec:GraphsDets} 

In this section, we use the monomial basis conjecture \eqref{mainconj} to write a formula for the matrix of expansion coefficients (equation \eqref{MatCoeffs}) of the monomials in terms of the projectors in $ \cA = \cA_L$, which correspond to nodes in the final layer of the degeneracy graph. An efficient way to prove \eqref{mainconj} would be to show the matrix has non-zero determinant when the eigenvalue labels of the degeneracy graphs satisfy the appropriate inequalities. Experimental study of the determinants for degeneracy graphs constructed systematically using the code in Appendix A shows that the determinant is indeed non-zero as expected from the conjecture. Further, there is an interesting factorised structure of the determinant related to the inequalities. This factorisation is formalised in the two conjectures \eqref{FactoredDet} and \eqref{DetConj2}. 

The determinant appearing in the change-of-basis matrix may be viewed as a layered generalisation of the Vandermonde determinant. Differences of eigenvalues appear as factors, with multiplicities determined by the combinatorial structure of the degeneracy graph. 

As explained in section \ref{sec:degengraphs}, the data $ \cD = ( L ; D_1 , D_2 , \cdots , D_L ; p_1 , c_2 , \cdots , c_{L-1} ) $ determines a degeneracy graph. $L$ is a positive integer and the graph is associated with the degeneracies of projectors in a sequence of algebras 
\bea 
\cA_1 \rightarrow \cA_2 \rightarrow \cdots \rightarrow \cA_{ L } \nnm 
\eea
generated by successively adding one generator at each stage. Thus 
\bea 
\cA_1  & = & \langle \cC_1 \rangle \subset \cA_L \cr
\cA_2 & = &  \langle \cC_1, \cC_2 \rangle \subset \cA_L  \cr  
& \vdots &  \cr 
\cA_{ i }   &  = & \langle \cC_1, \cC_2 , \cdots \cC_i  \rangle  \subset \cA_L  \cr 
& \vdots & \cr 
\cA_{ L-1 }   &  = & \langle \cC_1, \cC_2 , \cdots \cC_{ L-1}  \rangle  \subset \cA_L  \cr 
\cA_{ L } & = &  \langle \cC_1, \cC_2 , \cdots \cC_L   \rangle  \subset \cA_L 
\eea
The integers $D_i$ are increasing dimensions  
\bea 
D_1 < D_2 \cdots < D_L 
\eea 
The partition $p_1$ and the compositions $ c_2 , c_3 , \cdots , c_{ L-1} $, and the associated block 
decompositions of  $ [D_1] , [ D_2 ] , \cdots , [ D_L ] $, determine the degeneracy graph
as detailed in section \ref{sec:degrapdat}. Further, as elaborated in section \ref{sec:ProjEig}, each  node in layer $L$ corresponds to a projector in $ \cA_{ L } $ denoted $P^{(L)}_a$ with $ a \in [ D_L ] $. 
There are  eigenvalues  $ x^{(i)}_{ a} $ of $ \cC_{ i } $ : 
\bea 
&& \{ x^{ i}_a  :   a \in [ D_i ]  = \{ 1, 2 , \cdots, D_i\} \} \cr 
&&  \cC_i P^{(i)}_a = x^{(i)}_{ a}  P^{(i)}_a
\eea 
The eigenvalues obey the condition  \eqref{nondegleaves}, repeated here for convenience, 
\bea
 \hbox{ For  all  } a \in [ D_{i-1} ]  : ~~  b_1 , b_2 \in B^{(i)}_{ a } \hbox{ and }  b_1 \ne b_2   \implies x^{(i)}_{b_1}  \ne x^{(i)}_{b_2}  \nnm 
\eea

Given a node/projector at layer $L$ specified by   $ a_L \in  [D_L ] = \{ 1 , 2, \cdots , D_L \} $ we have an eigenvalue $ x^{(L)}_{a_L} $ of $ \cC_L $. The sequence of block decompositions specifies a sequence of ancestor projectors for $ P^{(L)}_{ a_L  } $ 
\bea 
P^{ (1)}_{ a_1} \rightarrow P^{ (2)}_{ a_2} \rightarrow \cdots \rightarrow  P^{(L-1)}_{ a_{L-1}}  \rightarrow P^{(L)}_{ a_L} 
\eea 
determined by the block decompositions : 
\bea 
 a_{L}  & \in &   B_{ a_{ L-1} }^{ (L-1)} \cr 
 a_{ L-1}  & \in & B_{ a_{ L-2} }^{ (L-2)} \cr 
% a_{L-2}    & \in &  B_{a_{ L-3}   }^{(L-3)  }  \cr 
 &  \vdots &  \cr 
 a_k   & \in   &  B_{a_{ k-1 }   }^{(k-1)  } \cr 
 &  \vdots &  \cr 
  a_{ 2}    & \in & B_{ {a_1}   }^{ (1)   } 
\eea
In turn we have a sequence of eigenvalues 
\bea 
( x^{(1)}_{ a_1} , x^{(2)}_{ a_2} , \cdots , x^{ (L)}_{ a_L } ) 
\eea
which uniquely determine the projector $P^{(L)}_{ a_L} $. 
In the graph picture $a_{L-1} \in [ D_{ L-1} ]  $ specifies the parent node in layer $ ( L-1) $ of the node $a_L  \in [ D_L ] $. In turn, $ a_{ L-2} \in [ D_{ L-2} ]  $ specifies the parent in layer $ ( L-2)$ of $ a_{ L-1} \in [ D_{ L-1} ]  $, and so forth walking backward to $a_1 \in [ D_1  ]$.

Any monomials in the generators, and in particular monomials specified by the exponents
$ \bm = ( m_1 , m_2 , \cdots ,m_L ) $  in the set $ \cS ( \cD  ) \equiv  \cS (\cA_1 \rightarrow \cdots \rightarrow \cA_L  ) $ have an expansion in projectors 
\bea\label{MonomEigs}  
\prod_{ k =1}^{ L } \cC_k^{ m_k }  
=   \sum_{ a_L  \in [ D_L ]  }   \left (     \prod_{ k=1}^{ L }    ( x_{ a_k  }^{(k)}  )^{ m_k    }       \right )  \cP^{ (L)}_{ a_L } 
\eea
This defines the matrix of coefficients  $ \cM $ of size $ ( D_L \times D_L  )  $ which expresses the basis of monomials in terms of the basis of projectors : 
\bea\label{MatLabels} 
\cM_{ a_L , \bm  }  :    a_{ L } \in [  D_L ]  , \bm \in \hbox { exponents of monomials in } \cS ( \cD  ) 
\eea
where the label $a_{ L } \in [ D_L ] $ runs over the projector basis  of $ \cA_L$ and the column index ${ \bm } $ runs over the monomials, and 
\bea\label{MatCoeffs} 
\cM_{ a_L , \bm  }  =  \left (     \prod_{ k=1}^{ L }    ( x_{ a_k  }^{(k)}  )^{ m_k    }       \right ) 
\eea 

\vskip.3cm 

\noindent 
{\bf Determinant Conjecture 1:  } 
The matrix $ \cM $  in \eqref{MatCoeffs} has a determinant 
\bea\label{FactoredDet}  
\det ( \cM ) = { \pm 1 }  \prod_{  i=1}^{ L }    \prod_{ a=1}^{ D_{ i-1 }  } \prod_{ p <  q \in B^{ (i)}_{ a }   }   ( x_p^{(i)} - x_q^{ (i)} )^{  {\rm Exponent}  (i,  p , q )  }   
\eea
with positive integer exponents, $ {\rm Exponent}  (i,  p , q )  > 0  $  for all $i \in  \{ 1, 2, \cdots , L \}  $, for all  $a \in [ D_{ i-1} ] $  and every pair $ p, q \in  B^{ (i)}_{ a } $. 

\vskip.2cm 

The product is over pairs of nodes at fixed level, which share the same parent : equivalently, the pairs at level  $i$ are in the same block as determined by the composition $ c_{ i-1} $ of 
$ [ D_i ] $ into $ D_{ i-1} $ parts. All the nodes in layer 1 are considered to have the same parent : this is naturally understood by extending the graph to a layer labelled $0$, which contains just one node and has edges connecting it to all the nodes in layer $1$.  In terms of the sequence of algebras we may consider extending to include $ \cA_0 = \mC $ which is spanned by complex multiples of the identity element $ \mathbf{1}$. In the formula \eqref{FactoredDet}, $ D_0$  is defined as $1$ and
$ B^{ (1)}_{ a=1 }  $ is defined to include all the nodes in layer 1 of the graph. 

\vskip.3cm 
 
\noindent 
{\bf Determinant Conjecture 2:}
 The exponents $ { \rm Exponent}  ( i , p , q )$  in \eqref{FactoredDet} are given by: 
\begin{align}\label{DetConj2}  
 { \rm Exponent}  ( i , p , q )  &=  1 \cr 
 ~~\hbox{ for } ~~ i & = L \cr 
  { \rm Exponent}  ( i , p , q )  &=   \hbox { Number monomials containing $ \cC_1$ in the truncated graph  }\cG_{ \trunc } ( i , p , q  ) \cr 
  ~~ \hbox{ for } ~~ i &  \in  \{ 1 , \cdots , L-1  \} 
\end{align}
The truncated graph $ \cG_{ \trunc } ( i , p , q  ) $   drops the layers $ j < i $, and at layer $i$, keeps only the two nodes $ p,q \in [ D_i ] $.  Further, at layers $j$ in the range $ i+1 \le j \le L $, it only keeps the descendants of the two nodes $p,q$ along with all the edges linking these descendants back to $p,q$. 

\vskip.2cm 

We can equivalently express conjecture 2 by saying that the exponent is 
the number of values of  vectors $ [  d_{ i+1} , \cdots , d_{ L } ] $ for which 
\bea  
| \cS^{(i)}_{ [  d_{ i+1} , \cdots , d_L ] }  ( \cG_{ \trunc } ( i , p , q  ) )  | = 2
\eea
The equality holds for   $ [ d_{ i+1}, \cdots , d_L ] = [ 1,1, \cdots , 1  ]$, so the exponent is at least $1$. 

We give code written in sagemath \cite{sagemath} which performs the following tasks:  
\begin{itemize} 

\item Given data $L$ , $ D = [ D_1 , \cdots , D_L ] $ obeying $ D_1 < D_2 < \cdots < D_L $, and partitions/compositions $ \{ p_1 , c_2 , \cdots , c_{ L-1} \} $, produces the graph, the monomial basis, the transformation matrix, and the factored determinant.  

\item Verifies conjecture 1. 

\item Verifies Conjecture 2. 

\end{itemize} 

A guide to the code is in the appendix \ref{sec:AppCode}.

\section{ Properties of the monomial basis } 
\label{sec:Props} 

In this section we describe some properties  the monomial basis $ \cS ( \cA_1 \rightarrow \cdots \rightarrow \cA_L ) $ given in section \ref{sec:monombasis}.

\subsection{ Partitions, compositions and symmetries } 

In section \ref{sec:degrapdat}, we described how the data of $ \cD = \{ L ; D_1 , D_2 , \cdots , D_L ; p_1 , c_2 , \cdots , c_{ L-1} \} $ determines the degeneracy structure associated with an ordered sequence of generators $ \{ \cC_1 , \cC_2 , \cdots , \cC_{ L } \} $ of an algebra. 
We chose $p_1$ to belong to the set of partitions of $D_2$ with $D_1$ parts. In constructing the graphs we can choose, without loss of generality, to arrange the parts of $p_1$ to be in weakly increasing order as we go up along the first layer (as illustrated in the graphs in the Appendix 
\ref{sec:AppCode}). These parts determine the number of daughters of each of the $D_1$ nodes. 
The data $c_2$ belongs to the set of compositions of $D_3 $ into $D_2$ parts, which is larger than the set of partitions of $D_3$ with $D_2$ parts, since compositions are counted as distinct when they have the same parts in a different order. The choice of compositions $ c_2, c_3 , \cdots , c_{ L-1}$ to describe  the node connections between successive layers beyond the second layer ensures that we produce the full set of degeneracy graphs. We can start with the larger data of a composition $c_1$ for the node connections between the first two layers as well, but we would then be producing an obvious redundancy of equivalent graphs related by re-ordering the nodes in the first layer. The data including compositions $\{ c_2 , \cdots , c_{ L-1} \} $ does have further more subtle redundancies associated with orbits of wreath product groups of the kind 
$S_{ r_1 } [ S_{ r_2 } ] $,  iterated wreath products $ S_{ r_1} [ S_{ r_2 } [ S_{ r_3 } ] ] $  
and higher iterations. A systematic counting of the degeneracy graphs at each layer, which takes into account these redundancies is an interesting problem for the future.

\subsection{ Order ideal property } 

If a monomial $ \cC_1^{ a_1} \cC_{2}^{ a_2} \cdots \cC_{ L }^{ a_L } $ appears in the  proposed monomial basis set, then all the monomials $ \cC_1^{ b_1 } \cC_{2}^{ b_2} \cdots \cC_{ L }^{ b_L  } $ with $ 1\le b_1 \le a_1 , 1 \le b_2  \le a_2 , \cdots , 1 \le b_L  \le a_L $ also  appear in the  basis. 

First consider the case where $ a_2=b_2, \cdots  , a_L = b_L$, and $ b_1 < a_1 $. In this case the above property is evident because  of the definition in  \eqref{eq:MonoBasisforAL}. 

The next case is where $ [ b_2 , \cdots , b_L] \le [ a_2 , \cdots , a_L ]$ without complete equality. In this case, the property follows because the subsets $ S^{(1)}_{ \vec d } \subset \cA_1$, defined in section \ref{sec:defS1},  obey the inclusions: 
\bea 
 &&   1 \le b_2  \le a_2 , \cdots , 1 \le b_L  \le a_L  
 \cr
 && \implies  ~~~    \cS^{(1)}_{ [ a_2 , \cdots , a_{ L}] } \subseteq \cS^{(1)}_{[ b_2 , \cdots , b_L] } \cr 
 && 
 \eea 
 When we decrease one or more of the sequence $ \{  a_2 , \cdots , a_L \} $ to obtain the $ \{ b_2, \cdots, b_L \} $, the defining condition of the $S^{(1)}_{ [ \vec b ] } $ is weaker than the one $\cS^{(1)}_{ [ \vec a ] }$, as the lower bounds on the daughter degeneracies are being relaxed. It follows that 
 \bea 
  | \cS^{(1)}_{[  a_2 , \cdots , a_{ L}] } |  \le | \cS^{(1)}_{ [  b_2  , \cdots , b_L]  } |
\eea

This can be phrased as a downward closed property (also called an order ideal property) in the set $ {\mathbb{N}}^{ \times L } $, which consists of tuples of non-negative integers $ ( a_1 , \cdots , a_L )$. A partial order $ \le $ on the set is defined by 
\bea 
a \le b  ~~\hbox{ if and only if } ~~ a_i \le b_i ~~ \hbox{ for all  } ~~~ i 
\eea
A subset $ S \subset {\mathbb{N}}^{ \times L } $ is said to be downward closed (or an order ideal) if 
$ a \in S $ implies that $ b \in S $ for all $ b \le a$.

The order ideal property suggests that an alternative way to study centres of group algebras will be to look at them as quotients of a polynomial ring $ \mathbb{R} [ \cC_1 , \cdots , \cC_L ]  $ 
by an ideal $ \cI $ using results in computational algebraic geometry \cite{HH2011,CLO2015}. 
The ideal $\cI$ will be generated by the monomials corresponding to the complement in 
$ \mathbb{N}^{ L }$ of the exponent set defined above. An interesting future direction is to realise the finite algebra 
$\cA$ as a quotient of the polynomial ring 
$\mathbb{R} [ \cC_1 , \cdots , \cC_L ]$ by a monomial ideal $I$,
and to determine generators and relations for $I$.

\subsection{Dependence on the order of generators }

The definition of the sequence of sub-algebras of $\cA =  \cA_L$ we described in \ref{sec:degengraphs}, 
depends on a choice of ordering of the $L$ generators of the minimal generating set. The degeneracy graph, and resulting monomial basis,  likewise depends on this choice. The first layer has a number of nodes equal to the number of eigenvalues of $ \cC_1$, the second has a number of nodes equal to the number of distinct  ordered eigenvalue pairs for $ \{ \cC_1 , \cC_2 \} $. 
The number of nodes in the final layer is independent of the choice of ordering of the generators, since it is equal to the dimension of $ \cA_L$. The monomial basis construction of 
\ref{sec:monombasis} is defined for any choice of ordering and the counting proof of section \ref{sec:Counting} holds for any choice. 
However, the final algebra and its primitive projectors are independent of this choice, and each ordering yields a valid basis construction.
A systematic account of the $S_L$ action on the degeneracy graphs obtained from different choices of ordering of the generators, and the characterisation of the complete space of combinatorial  $S_L$ invariants would be interesting. 

\section{Applications of the monomial basis. } 
\label{sec:Apps}

In this section we describe applications of the monomial basis given in section \ref{sec:monombasis}.
The initial motivations for this paper described in the introduction came from centres of group algebras, in particular symmetric group algebras, which inform correlators of $U(N)$ gauge theory, in particular $ \cN=4$ SYM. The first application we consider in section 
\ref{sec:centressymm} relates directly to this motivational example. 

As we have seen,  the question of the construction of projectors in terms of a  non-linear generating sets,  which we were led to, is naturally tackled in the wider context of commutative associative semi-simple algebras. Further interesting instances of these are 
maximally commutative sub-algebras of {\it non-commutative} associative algebras. Sections \ref{sec:JM} and \ref{sec:MUandMultMat} describe applications in this more general context.

\subsection{ Examples from centres of symmetric group algebras } 
\label{sec:centressymm}

The monomial bases can be used to start with a minimal generating set for the centre of the symmetric group algebra $ \cZ( \mC ( S_n) ) $. Physically interesting examples of such generating sets are the cycle operators of increasing cycle length 
$ \{ T_2 , T_3 , \cdots , T_{ k_*(n)} \} $. We use the monomial basis specified in \eqref{mainconj} to obtain the expansion of the monomials in terms of projectors using the known characters of this small subset of conjugacy classes. By inverting the matrix of expansion coefficients, we express the projectors in terms of the generating set. The characters for more general conjugacy classes beyond the generating set can be read off from the eigenvalues obtained by applying the more general class sums to the projectors.

We illustrate this for the cases of $n = 5$ and $n = 10$. The sequence of subalgebras are $\cA_1 (L=1)$ and $\cA_1 \rightarrow \cA_2, (L=2)$ respectively. From the data given in the degeneracy graph, we give the monomials $\cS(\cA_1)$ for the $n=5$ case, and $\cS(\cA_1 \rightarrow \cA_2)$ for the $n=10$ case. We can then expand the monomials in terms of the projector basis thus generating the matrix of coefficients $\cM$. This matrix has maximal rank and is therefore invertible. Inverting the system of equations then allows us to express the projector basis in terms of the monomial basis. As a last check of the conjecture and algorithm, we verify various eigenvalue equations of the form (\ref{eigvaleqforClassSum}),
\bea 
\cC_i P^{(i)}_a = x^{ (i)}_a P^{(i)}_a,
\eea
the class sums should satisfy.

\subsubsection{$n = 5$}

For this example, the data specifying the degeneracy graph is $\cD = (L;D_1) = (1;7)$. The degeneracy graph is a single layer with seven nodes in $\cA_1$. 
The monomials are
\bea
	\cS(\cA_1) = \{ 1 , \cC_1 , \cC^{2}_{1} , \cdots , \cC^{6}_{1} \}.
\eea
The matrix expressing the monomials in the projector basis is
\bea
	&&\left( 1, \cC^{}_{1} , \cC^{2}_{1}  , \cC^{3}_{1}  , \cC^{4}_{1}  ,\cC^{5}_{1} , \cC^{6}_{1}  \right)  = \left( P_{(5)} , P_{(4,1)} , P_{(3,2)} , P_{(3,1,1)} , P_{(2,2,1)} , P_{(2,1,1,1)} , P_{(1,1,1,1,1)}\right) \cM \nonumber\\
	&&\nonumber\\
	\hbox{where }&&\cM = \left(
\begin{array}{ccccccc}
 1 & 10 & 100 & 1000 & 10000 & 100000 & 1000000 \\
 1 & 5 & 25 & 125 & 625 & 3125 & 15625 \\
 1 & 2 & 4 & 8 & 16 & 32 & 64 \\
 1 & 0 & 0 & 0 & 0 & 0 & 0 \\
 1 & -2 & 4 & -8 & 16 & -32 & 64 \\
 1 & -5 & 25 & -125 & 625 & -3125 & 15625 \\
 1 & -10 & 100 & -1000 & 10000 & -100000 & 1000000 \\
\end{array}
\right).
\eea
Its inverse expresses the projectors in terms of the monomials $\cS(\cA_1)$
\bea
	&&\left(  P_{(5)} , P_{(4,1)} , P_{(3,2)} , P_{(3,1,1)} , P_{(2,2,1)} , P_{(2,1,1,1)} , P_{(1,1,1,1,1)} \right) = \left( 1, \cC^{}_{1} , \cC^{2}_{1}  , \cC^{3}_{1}  , \cC^{4}_{1}  ,\cC^{5}_{1} , \cC^{6}_{1}  \right)\cM^{-1} \nonumber\\
	&& \nonumber\\
	\hbox{where }&&\cM^{-1} = \left(
\begin{array}{ccccccc}
 0 & 0 & 0 & 1 & 0 & 0 & 0 \\
 \frac{1}{1440} & -\frac{8}{315} & \frac{625}{2016} & 0 & -\frac{625}{2016} & \frac{8}{315} & -\frac{1}{1440}
   \\
 \frac{1}{14400} & -\frac{8}{1575} & \frac{625}{4032} & -\frac{3}{10} & \frac{625}{4032} & -\frac{8}{1575} &
   \frac{1}{14400} \\
 -\frac{29}{144000} & \frac{52}{7875} & -\frac{125}{8064} & 0 & \frac{125}{8064} & -\frac{52}{7875} &
   \frac{29}{144000} \\
 -\frac{29}{1440000} & \frac{52}{39375} & -\frac{125}{16128} & \frac{129}{10000} & -\frac{125}{16128} &
   \frac{52}{39375} & -\frac{29}{1440000} \\
 \frac{1}{144000} & -\frac{1}{15750} & \frac{1}{8064} & 0 & -\frac{1}{8064} & \frac{1}{15750} &
   -\frac{1}{144000} \\
 \frac{1}{1440000} & -\frac{1}{78750} & \frac{1}{16128} & -\frac{1}{10000} & \frac{1}{16128} &
   -\frac{1}{78750} & \frac{1}{1440000} \\
\end{array}
\right).\nonumber\\
\label{eq:InverseMneq5}
\eea
We can verify the eigenvalue equations 
\bea
	\cC_\lambda P_{R} = \widehat \chi_R(\cC_\lambda) P_R.
\eea
Choosing $\cC_\lambda = \cC_5$, which the sum of permutations with cycle length 5, the normalised characters can be calculated to be \cite{CGS}
\bea
	&&\{ \widehat \chi_{(5)}(\cC_5) , \widehat \chi_{(4,1)}(\cC_5) , \widehat \chi_{(3,2)}(\cC_5) , \widehat \chi_{(3,1,1)}(\cC_5) , \widehat \chi_{(2,2,1)}(\cC_5)  , \widehat \chi_{(2,1,1,1)}(\cC_5) , \widehat \chi_{(1^5)}(\cC_5)   \} \nonumber\\
	&& \hspace{150pt} = \{24, -6, 0, 4, 0, -6, 24\} 
\eea
Using (\ref{eq:InverseMneq5}), we have verified
\bea
	\cC_5 P_{(5)} &=& 24 P_{(4,1)},  \hspace{20pt }\cC_5 P_{(4,1)} = -6 P_{(4,1)}, \hspace{20pt }\cC_5 P_{(3,2)} = 0 P_{(3,2)}\\
	\cC_5 P_{(3,1,1)} &=& 4 P_{(4,1)}\hspace{20pt }\cC_5 P_{(2,2,1)} = 0 P_{(2,2,1)}, \hspace{20pt }\cC_5 P_{(2,1,1,1)} = -6 P_{(2,1,1,1)}\\
	\cC_5 P_{(1,1,1,1,1)} &=& 24 P_{(1,1,1,1,1)}.
\eea

\subsubsection{$n = 10$}

For $n = 10$, the data specifying the degeneracy graph is 
\bea
	\cD &=& (2;31,42; p_1)\\ 
	p_1 &=& (3, 3, 3, 3, 2, 2, 2, 1, 1, 1, 1, 1, 1, 1, 1, 1, 1, 1, 1, 1, 1, 1, 1, 1, 1, 1, 1, 1, 1, 1, 1).
\eea
The degeneracy graph is given below.

\begin{center}
\begin{tikzpicture}[
  xnode/.style={circle, draw, minimum size=3mm, inner sep=0pt},
  ynode/.style={circle, draw, minimum size=2.6mm, inner sep=0pt},
  line width=0.4pt
]

\def\xsep{1.35}
\def\childsep{0.28}
\def\xshift{3.2}

%------------------------------------------------
% Layer 1  (TOP = most children)
%------------------------------------------------

% ---- 3 children (TOP)
\node[xnode] (x1) at (0,12*\xsep) {};
\node[xnode] (x2) at (0,11*\xsep) {};
\node[xnode] (x3) at (0,10*\xsep) {};
\node[xnode] (x4) at (0,9*\xsep)  {};

% ---- 2 children
\node[xnode] (x5) at (0,7.5*\xsep) {};
\node[xnode] (x6) at (0,6.5*\xsep) {};
\node[xnode] (x7) at (0,5.5*\xsep) {};

% ---- 1 child
\node[xnode] (x8)  at (0,4*\xsep)  {};
\node[xnode] (x9)  at (0,3*\xsep)  {};
\node[xnode] (x10) at (0,2*\xsep)  {};

% dots
\node at (0,1*\xsep) {$\vdots$};

% final node
\node[xnode] (xN) at (0,0) {};

%------------------------------------------------
% Layer 2
%------------------------------------------------

% 3 children
\foreach \i/\pos in {1/12,2/11,3/10,4/9}{
  \node[ynode] (a\i1) at (\xshift,{\pos*\xsep-\childsep}) {};
  \node[ynode] (a\i2) at (\xshift,{\pos*\xsep}) {};
  \node[ynode] (a\i3) at (\xshift,{\pos*\xsep+\childsep}) {};
}

% 2 children
\foreach \i/\pos in {5/7.5,6/6.5,7/5.5}{
  \node[ynode] (b\i1) at (\xshift,{\pos*\xsep-0.22}) {};
  \node[ynode] (b\i2) at (\xshift,{\pos*\xsep+0.22}) {};
}

% single children
\node[ynode] (c8)  at (\xshift,{4*\xsep}) {};
\node[ynode] (c9)  at (\xshift,{3*\xsep}) {};
\node[ynode] (c10) at (\xshift,{2*\xsep}) {};
\node[ynode] (cN)  at (\xshift,{0}) {};

%------------------------------------------------
% Edges
%------------------------------------------------

\foreach \i in {1,2,3,4}{
  \draw (x\i) -- (a\i1);
  \draw (x\i) -- (a\i2);
  \draw (x\i) -- (a\i3);
}

\foreach \i in {5,6,7}{
  \draw (x\i) -- (b\i1);
  \draw (x\i) -- (b\i2);
}

\draw (x8) -- (c8);
\draw (x9) -- (c9);
\draw (x10) -- (c10);
\draw (xN) -- (cN);

\end{tikzpicture}
\end{center}
The cardinalities of the $\cS^{(1)}_{\left[d_2\right]}$ sets are
\bea
	\left| \cS^{(1)}_{\left[1\right]}  \right| = 31, \hspace{10pt} \left| \cS^{(1)}_{\left[2\right]}  \right| = 7,  \hspace{10pt} \left| \cS^{(1)}_{\left[3\right]}  \right| = 4. 
\eea
Thus, the monomial basis is given by 
\bea
	\cS(\cA_1 \rightarrow \cA_2) =  \{ 1 , \cC_1 , \cdots , \cC^{30}_{1} \} \,\sqcup \, \{ 1, \cC_1 , \cC^{2}_{1} , \cdots \cC^{6}_{1} \}\times \cC_2 \,\sqcup \, \{ 1, \cC_1 , \cC^{2}_{1} ,\cC^{3}_{1} \}\times \cC^{2}_2 .  
\eea
The $42\times 42$ matrix $\cM$, and its inverse, are too large to display explicitly here. Instead we give some examples of projectors constructed from the monomial basis. First, we give $P_{(6,2,2)}$: 
\bea
P_{(6,2,2)}&=&
-\frac{\;\cC_1^{30}}{5522254426354051088989949421158400}-\frac{\;\cC_1^{29}}{502023129668550098999086311014400} \nonumber\\
&& +\frac{
   1579 \;\cC_1^{28}}{1380563606588512772247487355289600}+\frac{1579
   \;\cC_1^{27}}{125505782417137524749771577753600} \nonumber\\
&& -\frac{2740559
   \;\cC_1^{26}}{920375737725675181498324903526400}-\frac{2740559
  \;\cC_1^{25}}{83670521611425016499847718502400} \nonumber\\
&& +\frac{5878925021
   \;\cC_1^{24}}{1380563606588512772247487355289600}+\frac{5878925021
   \;\cC_1^{23}}{125505782417137524749771577753600} \nonumber\\
&& -\frac{180094781021
   \;\cC_1^{22}}{48019603707426531208608255836160} - \frac{180094781021
   \;\cC_1^{21}}{4365418518856957382600750530560} \nonumber\\
&& +\frac{5387449482913
   \;\cC_1^{20}}{2501021026428465167115013324800}+\frac{5387449482913
   \;\cC_1^{19}}{227365547857133197010455756800} \nonumber\\
&& -\frac{154417204081099
   \;\cC_1^{18}}{185834379672703294150960742400}-\frac{154417204081099
   \;\cC_1^{17}}{16894034515700299468269158400}\nonumber\\
&&+\frac{413132738278397
   \;\cC_1^{16}}{1896269180333707083173068800}+\frac{413132738278397
   \;\cC_1^{15}}{172388107303064280288460800}\nonumber\\
&&-\frac{18142803896872086329
   \;\cC_1^{14}}{468027326583104592676493721600}-\frac{18142803896872086329
   \;\cC_1^{13}}{42547938780282235697863065600}\nonumber\\
&&+\frac{132908846628741353
   \;\cC_1^{12}}{28806346631289236348731392}+\frac{132908846628741353
   \;\cC_1^{11}}{2618758784662657849884672} \nonumber\\
&& -\frac{2351154314430344353675
   \;\cC_1^{10}}{6587051263021472045076578304}-\frac{2351154314430344353675
   \;\cC_1^9}{598822842092861095006961664} \nonumber\\
&& + \frac{47964038211297386431875
   \;\cC_1^8}{2805595908323960315495579648}+\frac{47964038211297386431875
   \;\cC_1^7}{255054173483996392317779968} \nonumber\\
&& -\frac{15271513417075733578125
   \;\cC_1^6}{32718319630600120297324544}-\frac{15271513417075733578125
   \;\cC_1^5}{2974392693690920027029504} \nonumber\\
&& +\frac{520566788125466015625
   \;\cC_1^4}{83465101098469694636032}+\frac{520566788125466015625
   \;\cC_1^3}{7587736463497244966912} \nonumber\\
&& -\frac{774196398193359375
   \;\cC_1^2}{26976438622646960128}-\frac{774196398193359375 \;\cC_1}{2452403511149723648}
\eea
Note that only $\cC_1$ is needed to construct $P_{(6,2,2)}$. Next, we give an example of a projector needing $\cC_1$ and $\cC_2$
\[
\resizebox{1.15\textwidth}{!}{$
\begin{aligned}
P_{(3,3,2,2)} &= -\frac{584161925369 \;\cC_1^{30}}{116000036232563518368057498009600000000000000}+\frac{162738322864283
   \;\cC_1^{29}}{5213373056966354696941555553402880000000000000}\\&
   +\frac{1458204377509648361
   \;\cC_1^{28}}{45617014248455603598238611092275200000000000000}-\frac{1808374066902925993
  \;\c C_1^{27}}{9123402849691120719647722218455040000000000000}\\&-\frac{1529055968126424940753
   \;\cC_1^{26}}{18246805699382241439295444436910080000000000000}+\frac{145927728789353295893
   \;\cC_1^{25}}{280720087682803714450699145183232000000000000}\\&+\frac{2967822301797334728181
   \;\cC_1^{24}}{24538469203042282731704470732800000000000000}-\frac{526819064214386540849179
  \;\c C_1^{23}}{701800219207009286126747862958080000000000000}\\&-\frac{65783412141385967677058777
  \;\c C_1^{22}}{610261060180008074892824228659200000000000000}+\frac{7428014303938786707475691
   \;\cC_1^{21}}{11095655639636510452596804157440000000000000}\\&+\frac{31136859629840187523420121
   \;\cC_1^{20}}{495341769626629930919500185600000000000000}-\frac{2979787169441349841770638021
   \;\cC_1^{19}}{7628263252250100936160302858240000000000000}\\&-\frac{198448408671424153189976816759
   \;\cC_1^{18}}{8029750791842211511747687219200000000000000}+\frac{188712042936430154615687495963
   \;\cC_1^{17}}{1228079532869985290031999221760000000000000}\\&+\frac{4073891259383876167486340516267
  \;\c C_1^{16}}{614039766434992645015999610880000000000000}-\frac{5067610981455885815362017590251
   \;\cC_1^{15}}{122807953286998529003199922176000000000000}\\&-\frac{46115916608792769774331873034852917
   \;\cC_1^{14}}{37958821925072272600989066854400000000000000}+\frac{6933195978780175713977264242252741
   \;\cC_1^{13}}{917685804781967029914021396480000000000000}\\&+\frac{1755894547812433824894454063234838279
   \;\cC_1^{12}}{11735789618846309132554312089600000000000000}-\frac{286527161632496771228598810461798249
   \;\cC_1^{11}}{308212656656569734794355671040000000000000}\\&-\frac{1483392354700094092421079169201878701
  \;\c C_1^{10}}{123285062662627893917742268416000000000000}+\frac{149938460197325026472362816238098457
   \;\cC_1^9}{2009089910057639752733577707520000000000}\\&+\frac{1541236690341222297542614164599693669
   \;\cC_1^8}{2567170440629206350715127070720000000000}-\frac{5721004596927777411817644429894828911
   \;\cC_1^7}{1540302264377523810429076242432000000000} \\& -\frac{20989972375051968467019009137832214993
  \;\cC_1^6}{1232241811502019048343260993945600000000}+\frac{7480202528492092030739772980078357617
  \;\cC_1^5}{71196193553449989459832857427968000000} \\& +\frac{189878432079024987353938557400197739
  \;\cC_1^4}{809047654016477152952646107136000000}-\frac{13583679524069232687062445130593613
  \;\cC_1^3}{9417485919768517124316515532800000} \\& - \frac{33577691389045313943369938486543
 \;\cC_1^2}{30378986837962958465537146880000}+\frac{1227550266059326111978047
 \;\cC_1}{182119734641790413701120} \\& -\frac{3 \;\cC_2^2}{12800}-\frac{9 \;\cC_1 \;\cC_2}{5120}+\frac{3
  \;\cC_1 \;\cC_2^2}{64000}+\frac{53 \;\cC_1^2 \;\cC_2}{128000}+\frac{\;\cC_1^2 \;\cC_2^2}{38400}+\frac{61
  \;\cC_1^3 \;\cC_2}{320000} \\& -\frac{\;\cC_1^3 \;\cC_2^2}{192000}-\frac{137
  \;\cC_1^4 \;\cC_2}{2880000}+\frac{\;\cC_1^5 \;\cC_2}{1920000}+\frac{\;\cC_1^6 \;\cC_2}{5760000}+\frac{27}{512}
\end{aligned}
$}
\]
We have also verified that summing over all projectors constructed in terms of $\cS(\cA_1 \rightarrow \cA_2)$ using $\cM^{-1}$ satisfies
\bea
	\sum_{R\vdash 10} P_{R} = 1.
\eea
We may also verify the eigenvalue equations. For example, the normalized characters for the class sums $\cC_1 , \cC_2$ and $\cC_{(2,2)}$, taken over the $S_{10}$ irreps, are
\bea
	\widehat{\chi}_{R}(\cC_1) &\rightarrow& \{45, 35, 27, 25, 21, 18, 15, 17, 13, 11, 9, 5, 15, 10, 7, 5, 3, 0, \
-5, 5, 3, 3, 0, -3, -3,\nonumber\\
&&  -5, -9, -15, -3, -5, -7, -11, -10, -13, -18,-25, -15, -17, -21, -27, -35, -45\}\\
\widehat{\chi}_{R}(\cC_2) &\rightarrow& \{240, 160, 96, 100, 48, 51, 60, 16, 16, 16, 24, 40, 0, -5, -8, 0, 0, \
15, 40, -20, -12,\nonumber\\
&& -24, -15, 0, -12,  0, 24, 60, -24, -20, -8, 16, -5, 16, 51, 100, 0, 16, 48, 96, 160, 240\}\\
\widehat{\chi}_{R}(\cC_{(2,2)}) &\rightarrow& \{630, 350, 198, 140, 126, 63, 0, 98, 38, 14, -18, -70, 90, 35, 14, \
-10, -18, -45, \nonumber\\
	&&-70, 20, 0, 18, 0, -18, 0, -10, -18, 0, 18, 20, 14, 14, 35, 38, 63, 140, 90, 98, 126, 198, 350, 630\}. \nonumber\\
\eea
Note that amongst the eigenvalues $x^{(1)}_{i}$ for $\widehat{\chi}_{R}(\cC_1)$, $3,5,-3,-5$ each have degeneracy 3, while $0,15,-15$ each have degeneracy 2. All other eigenvalues have degeneracy 1. We have verified that
\bea
	\cC_1 P_{R} = \widehat{\chi}_{R}(\cC_1) P_{R} , \hspace{10pt} \cC_2 P_{R} = \widehat{\chi}_{R}(\cC_2) P_{R}  , \hspace{10pt} \hbox{and} \hspace{10pt} \cC_{(2,2)} P_{R} = \widehat{\chi}_{R}(\cC_{(2,2)}) P_{R},
\eea
for all $R$. In particular, for $(6,2,2)$ and $(3,3,2,2)$ that we explicitly presented above,
\bea
\{ \widehat \chi_{(6,2,2)}(\cC_1 ) ,  \widehat \chi_{(6,2,2)}(\cC_2 )  , \widehat \chi_{(6,2,2)}(\cC_{(2,2)} )  \} &=& \{ 11, 16 , 14 \},\\
\{ \widehat \chi_{(3,3,2,2)}(\cC_1 ) ,  \widehat \chi_{(3,3,2,2)}(\cC_2 )  , \widehat \chi_{(3,3,2,2)}(\cC_{(2,2)} )  \} &=& \{ -5, -20 , 20 \}.
\eea
We have checked that
\bea
	&&\cC_1 P_{(6,2,2)} = 11 P_{(6,2,2)} , \hspace{10pt}\cC_2 P_{(6,2,2)} = 16 P_{(6,2,2)} , \hspace{10pt} \hbox{and} \hspace{10pt} \cC_{(2,2)} P_{(6,2,2)} = 14 P_{(6,2,2)} ,\\
	&&\cC_1 P_{(3,3,2,2)} = -5 P_{(3,3,2,2)} , \hspace{10pt}\cC_2 P_{(3,3,2,2)} = -20 P_{(3,3,2,2)} , \hspace{10pt} \hbox{and} \hspace{10pt} \cC_{(2,2)} P_{(3,3,2,2)} = 20 P_{(3,3,2,2)}.\nonumber\\
\eea

\subsection{ Maximal commuting sub-algebras for $ \mC ( S_n) $ : Jucys-Murphy elements } 
\label{sec:JM}

Well-studied instances  of maximally commuting sub-algebras of non-commutative semisimple algebras are the Jucys-Murphy algebras of symmetric group algebras $ \mC( S_n ) $. 
In applications to gauge invariant operators in large-$N$ gauge theories,  Jucys--Murphy elements in symmetric group algebras have been used for example in \cite{EHS,DIS}. 

 There are elegant formulae for the eigenvalues of $ \{ J_2 , \cdots , J_n \} $, which are easily read off from standard tableaux, i.e. Young diagrams with $n$ boxes, with the numbers $\{ 1, 2, \cdots , n \}$ entered into the boxes according to specified constraints. This spectral information gives precisely the labels for the nodes of  the degeneracy graphs we have described. Thus the invertibility of the matrix of coefficients 
\eqref{MatCoeffs} and the resulting construction of projectors has immediate applications to the Jucys-Murphy algebras.

The primitive idempotents of the Jucys–Murphy algebra coincide with the diagonal matrix units in Young’s seminormal representation \cite{JamesKerber,Jucys,Murphy,OkouVersh}. The JM generators therefore provide a complete set of commuting observables with simple joint spectrum, whose eigenlines correspond to standard tableaux and whose spectral projectors are precisely the seminormal rank-one projectors. Our construction recovers these projectors directly from their joint eigenvalues—encoded by the degeneracy graph—without invoking the explicit tableau-based action of permutations.

\subsubsection{Jucys-Murphy elements for $S_3$}

Below we use the  layered  branching graph whose vertices are the Symmetric group standard Young tableaux. Level $k$ of the graph consists of all standard tableaux for Young diagrams with $k$ boxes. Thus, the total number of vertices in layer $k$ is $\sum_{R} d_{R}$, where $d_{R}$ is the dimension of irrep $R$ of $S_k$. A vertex in layer $k$ is connected to a vertex in layer $k+1$ by an edge if the standard tableau in $k+1$ can be obtained from the tableau in $k$ by adding a box labelled by $k+1$. We begin from $k=1$ and go up to $k=3$. 

The eigenvalue of a Jucys-Murphy element $J_k$, acting on a given tableau, is the content of the box  containing the label $k$. The content for the box at row $i$ and column $j$ is defined as   $ ( j -i )$. 
 The contents of the $S_3$ irreps are as follows: 
\bea
\ytableausetup{boxsize=1.25em}
\begin{ytableau}
       \none &0&1 & 2
\end{ytableau}, \hspace{10pt} \ytableausetup{boxsize=1.25em}
\begin{ytableau}
       \none &0&1 \\
       	\none &{\scriptstyle -1}\\
\end{ytableau}, \hspace{10pt} \ytableausetup{boxsize=1.25em}
\begin{ytableau}
       \none &0 \\
       	\none &{\scriptstyle -1}\\
	\none &{\scriptstyle -2}\\
\end{ytableau}
\eea
On the far right-hand side of the branching graph we give the eigenvalues for $(J_2 , J_3)$ on each of the $S_3$ tableaux.
\vspace{5pt}

\begin{centering}
\begin{tikzpicture}[
  every node/.style={anchor=center}
]

\ytableausetup{boxsize=1.1em}

\node (nJ2J3) at (9,4.25) {$(J_2 , J_3 )$};

% Root
\node (n1) at (0,0) {$
\begin{ytableau}
1
\end{ytableau}
$};

% Level 2
\node (n12) at (3,1.5) {$
\begin{ytableau}
1 & 2
\end{ytableau}
$};

\node (n21) at (3,-1.5) {$
\begin{ytableau}
1 \\
2
\end{ytableau}
$};

% Level 3 from top branch
\node (n123) at (7,3) {$
\begin{ytableau}
1 & 2 & 3
\end{ytableau}
$};

\node (n123eig) at (9,3) {$(1,2)$};

\node (n12_3) at (7,1) {$
\begin{ytableau}
1 & 2 \\
3
\end{ytableau}
$};

\node (n123eig2) at (9,1) {$(1,-1)$};

% Level 3 from bottom branch
\node (n13_2) at (7,-1) {$
\begin{ytableau}
1 & 3 \\
2
\end{ytableau}
$};

\node (n123eig3) at (9,-1) {$(-1 , 1)$};

\node (n1_23) at (7,-3.5) {$
\begin{ytableau}
1 \\
2 \\
3
\end{ytableau}
$};

\node (n123eig4) at (9,-3.5) {$(-1,-2)$};

% Edges
\draw (n1) -- (n12);
\draw (n1) -- (n21);

\draw (n12) -- (n123);
\draw (n12) -- (n12_3);

\draw (n21) -- (n13_2);
\draw (n21) -- (n1_23);

\end{tikzpicture}
\end{centering}

The data specifying the corresponding degeneracy graph is
\bea
	\cD &=& (2;2,4,p_1)\\ 
	p_1 &=& (2,2).
\eea
The degeneracy graph is shown below. 

\begin{center}
\begin{tikzpicture}[
  node distance = 8mm and 25mm,
  every node/.style = {circle, draw, minimum size=6mm},
]

%------------------------------------------------
% Left column: x-nodes
%------------------------------------------------
\node (x1) {};
\node (x2) [above=of x1] {};

%------------------------------------------------
% Right column: y-nodes
%------------------------------------------------
\node (y11) [right=of x1] {};
\node (y12) [above=of y11] {};
\node (y21) [above=of y12] {};
\node (y22) [above=of y21] {};

%------------------------------------------------
% Edges
%------------------------------------------------
\draw (x1) -- (y11);
\draw (x1) -- (y12);
\draw (x2) -- (y21);
\draw (x2) -- (y22);

\end{tikzpicture}
\end{center}

Furthermore, 
\bea
	&&\cS^{(1)}_{\left[1\right]} = \{ 1,2 \} , \hspace{10pt} \cS^{(1)}_{\left[2\right]} = \{ 1,2 \} \\
	&&\left|\cS^{(1)}_{\left[1\right]}\right| = 2, \hspace{10pt}  \left|\cS^{(1)}_{\left[2\right]}\right| = 2.
\eea
Thus, the monomials are
\bea
	\cS(\cA_1 \rightarrow \cA_2) = \{ 1 , J_2  \} \sqcup  \{ 1 , J_2 \}\times J_3.
\eea
The eigenvalue equation for the $J_{k}$ acting on the projector basis is
\bea
\label{eq:JMeigenvalueEqns}
	J_k P_I = c_I(k) P_I,
\eea
where $I$ ranges over the tableaux in the final layer in the graph, and the eigenvalue $c_I(k)$ is the content of the box labeled $k$ in the $I$th tableau. The expansion of the Jucys-Murphy elements in terms of the projector basis is
\bea
	(1 , J_2 , J_3 , J_2 J_3) = (P_1 , P_2 , P_3 , P_4) \left(\begin{array}{cccc}1 & 1 & 2 & 2 \\1 & 1 & -1 & -1 \\1 & -1 & 1 & -1 \\1 & -1 & -2 & 2\end{array}\right),
\eea
and the inverse of the above system is
\bea
	(P_1 , P_2 , P_3 , P_4) = (1 , J_2 , J_3 , J_2 J_3) \, \frac{1}{6} \left(\begin{array}{cccc}1 & 2 & 2 & 1 \\1 & 2 & -2 & -1 \\1 & -1 & 1 & -1 \\1 & -1 & -1 & 1\end{array}\right).
\eea
Using these relations, we can easily verify the eigenvalue equations (\ref{eq:JMeigenvalueEqns}). For example,
\bea
	J_2 P_1 = \frac{1}{6}\left( J_2 + J^{2}_{2} + J_2 J_3 + J^{2}_{2}J_3 \right).
\eea
We have $J_2 = (12)$, which means $J^{2}_{2} = 1$. Thus,
\bea
	J_2 P_1 = \frac{1}{6}\left( J_2 + 1 + J_2 J_3 + J_3 \right) = 1 P_1.
\eea
We can finally express $J_2$ and $J_3$ in terms of the projector basis
\bea
	J_2 = \left(\begin{array}{cccc}1 & 0 & 0 & 0 \\0 & 1 & 0 & 0 \\0 & 0 & -1 & 0 \\0 & 0 & 0 & -1\end{array}\right) , \hspace{20pt} J_3 = \left(\begin{array}{cccc}2 & 0 & 0 & 0 \\0 & -1 & 0 & 0 \\0 & 0 & 1 & 0 \\0 & 0 & 0 & -2\end{array}\right).
\eea

\subsubsection{Jucys-Murphy elements for $S_4$}
 
The branching graph from layer $k = 1$ up to $k=4$ is given below with the eigenvalues of $J_2 , J_3 $ and $J_4$ given on the far right hand side. 

\begin{tikzpicture}[
  every node/.style={anchor=center}
]

\node (J2J3J4) at (14,9) {$(J_2,J_3,J_4)$};

\ytableausetup{boxsize=1.1em}

%-----------------------
% Level 1
%-----------------------
\node (n1) at (0,0) {$
\begin{ytableau}
1
\end{ytableau}
$};

%-----------------------
% Level 2
%-----------------------
\node (n12) at (3,1.5) {$
\begin{ytableau}
1 & 2
\end{ytableau}
$};

\node (n21) at (3,-1.5) {$
\begin{ytableau}
1 \\
2
\end{ytableau}
$};

%-----------------------
% Level 3
%-----------------------
\node (n123) at (7,3) {$
\begin{ytableau}
1 & 2 & 3
\end{ytableau}
$};

\node (n12_3) at (7,1) {$
\begin{ytableau}
1 & 2 \\
3
\end{ytableau}
$};

\node (n13_2) at (7,-1) {$
\begin{ytableau}
1 & 3 \\
2
\end{ytableau}
$};

\node (n1_23) at (7,-3.5) {$
\begin{ytableau}
1 \\
2 \\
3
\end{ytableau}
$};

%-----------------------
% Level 4 (NEW)
%-----------------------

% From 123
\node (n1234) at (11,8.0) {$
\begin{ytableau}
1 & 2 & 3 & 4
\end{ytableau}
$};

\node (Tab1) at (14,8) {$(1,2,3)$};

\node (n123_4) at (11,6.3) {$
\begin{ytableau}
1 & 2 & 3 \\
4
\end{ytableau}
$};

\node (Tab2) at (14,6.3) {$(1,2,-1)$};

% From 12/3
\node (n12_34) at (11,5) {$
\begin{ytableau}
1 & 2 & 4 \\
3
\end{ytableau}
$};

\node (Tab3) at (14,5) {$(1,-1,2)$};

\node (n12_3_4) at (11,3.25) {$
\begin{ytableau}
1 & 2 \\
3 & 4
\end{ytableau}
$};

\node (Tab4) at (14,3.25) {$(1,-1,0)$};

\node (n12_3__4) at (11,1.5) {$
\begin{ytableau}
1 & 2 \\
3 \\
4
\end{ytableau}
$};

\node (Tab5) at (14,1.5) {$(1,-1,-2)$};

\node (n13_24) at (11,-0.5) {$
\begin{ytableau}
1 & 3 & 4 \\
2
\end{ytableau}
$};

\node (Tab6) at (14,-0.5) {$(-1,1,2)$};

\node (n13_2_4) at (11,-1.8) {$
\begin{ytableau}
1 & 3 \\
2 & 4
\end{ytableau}
$};

\node (Tab7) at (14,-1.8) {$(-1,1,0)$};

\node (n13_2__4) at (11,-4) {$
\begin{ytableau}
1 & 3 \\
2 \\
4
\end{ytableau}
$};

\node (Tab8) at (14,-4) {$(-1,1,-2)$};

% From 1/2/3
\node (n1_23_4) at (11,-6.0) {$
\begin{ytableau}
1 & 4 \\
2 \\
3
\end{ytableau}
$};

\node (Tab9) at (14,-6) {$(-1,-2,1)$};

\node (n1_234) at (11,-8.5) {$
\begin{ytableau}
1 \\
2 \\
3 \\
4
\end{ytableau}
$};

\node (Tab10) at (14,-8.5) {$(-1,-2,-3)$};

%-----------------------
% Edges (original)
%-----------------------
\draw (n1) -- (n12);
\draw (n1) -- (n21);

\draw (n12) -- (n123);
\draw (n12) -- (n12_3);

\draw (n21) -- (n13_2);
\draw (n21) -- (n1_23);

%-----------------------
% Edges to level 4
%-----------------------

% From 123
\draw (n123) -- (n1234);
\draw (n123) -- (n123_4);

% From 12/3
\draw (n12_3) -- (n12_34);
\draw (n12_3) -- (n12_3_4);
\draw (n12_3) -- (n12_3__4);

% From 13/2
\draw (n13_2) -- (n13_24);
\draw (n13_2) -- (n13_2_4);
\draw (n13_2) -- (n13_2__4);

% From 1/2/3
\draw (n1_23) -- (n1_23_4);
\draw (n1_23) -- (n1_234);
\end{tikzpicture}

The data for the corresponding degeneracy graph is given by
\bea
	\cD &=& (3 ; 2,4,10 ; p_1 , c_2)\\
	p_1 &=& (2,2) , \hspace{10pt} c_2 = (2,3,3,2) .
\eea

The graph is shown below

\begin{center}
\begin{tikzpicture}[
  node distance = 6mm and 25mm,
  every node/.style = {circle, draw, minimum size=6mm},
]

%------------------------------------------------
% Left column: x-nodes
%------------------------------------------------
\node (x1) {};
\node (x2) [above=of x1] {};

%------------------------------------------------
% Middle column: y-nodes
%------------------------------------------------
\node (y11) [right=of x1] {};
\node (y12) [above=of y11] {};
\node (y21) [above=of y12] {};
\node (y22) [above=of y21] {};

%------------------------------------------------
% Right column: z-nodes (new layer)
%------------------------------------------------
% children of y11 (2)
\node (z111) [right=of y11, yshift=-50mm] {};
\node (z112) [right=of y11, yshift=-30mm] {};

% children of y12 (3)
\node (z121) [right=of y12, yshift=-25mm] {};
\node (z122) [right=of y12, yshift=-9mm] {};
\node (z123) [right=of y12, yshift=5mm] {};

% children of y21 (3)
\node (z211) [right=of y21, yshift=5mm] {};
\node (z212) [right=of y21, yshift=20mm] {};
\node (z213) [right=of y21, yshift=35mm] {};

% children of y22 (2)
\node (z221) [right=of y22, yshift=35mm] {};
\node (z222) [right=of y22, yshift=50mm] {};

%------------------------------------------------
% Edges: x -> y
%------------------------------------------------
\draw (x1) -- (y11);
\draw (x1) -- (y12);
\draw (x2) -- (y21);
\draw (x2) -- (y22);

%------------------------------------------------
% Edges: y -> z (new layer)
%------------------------------------------------
\draw (y11) -- (z111);
\draw (y11) -- (z112);

\draw (y12) -- (z121);
\draw (y12) -- (z122);
\draw (y12) -- (z123);

\draw (y21) -- (z211);
\draw (y21) -- (z212);
\draw (y21) -- (z213);

\draw (y22) -- (z221);
\draw (y22) -- (z222);

\end{tikzpicture}
\end{center}

Furthermore,
\bea
	&&\left| \cS^{(1)}_{\left[1,1\right]} \right| = 2, \hspace{10pt} \left| \cS^{(1)}_{\left[2,1\right]} \right| = 2, \hspace{10pt} \left| \cS^{(1)}_{\left[1,2\right]} \right| = 2 \\
	&& \left| \cS^{(1)}_{\left[2,2\right]} \right| = 2, \hspace{10pt} \left| \cS^{(1)}_{\left[1,3\right]} \right| = 2.
\eea
The monomials are thus given by
\bea
\label{eq:MonosJMS4}
	&&\cS(\cA_1 \rightarrow \cA_2 \rightarrow \cA_3) =\\
	&& \Big( \{ 1 , J_2 \} \Big) \sqcup \Big( \{ 1 , J_2 \} \times J_3 \Big) \sqcup \Big( \{ 1 , J_2 \} \times J_4 \Big) \sqcup \Big( \{ 1 , J_2 \} \times J_3 \times J_4 \Big) \sqcup \Big( \{ 1 , J_2 \} \times J^2_4 \Big).\nonumber
\eea
We can write the monomials above in terms of the projector basis using the eigenvalues
\bea
	(1,J_2,J_3,J_4,J_2 J_3 , J_2 J_4 , J_3,J_4 , J^{2}_{4} , J_2 J_3 J_4 ,J_2 J^{2}_{4} ) =  (P_1 , P_2 , \cdots , P_{10}) \cM
\eea
where the matrix of coefficients $\cM$ is given by
\bea
	 \cM =  \left(
\begin{array}{cccccccccc}
 1 & 1 & 2 & 3 & 2 & 3 & 6 & 9 & 6 & 9 \\
 1 & 1 & 2 & -1 & 2 & -1 & -2 & 1 & -2 & 1 \\
 1 & 1 & -1 & 2 & -1 & 2 & -2 & 4 & -2 & 4 \\
 1 & 1 & -1 & 0 & -1 & 0 & 0 & 0 & 0 & 0 \\
 1 & 1 & -1 & -2 & -1 & -2 & 2 & 4 & 2 & 4 \\
 1 & -1 & 1 & 2 & -1 & -2 & 2 & 4 & -2 & -4 \\
 1 & -1 & 1 & 0 & -1 & 0 & 0 & 0 & 0 & 0 \\
 1 & -1 & 1 & -2 & -1 & 2 & -2 & 4 & 2 & -4 \\
 1 & -1 & -2 & 1 & 2 & -1 & -2 & 1 & 2 & -1 \\
 1 & -1 & -2 & -3 & 2 & 3 & 6 & 9 & -6 & -9 \\
\end{array}
\right).
\eea
The inverse of $\cM$ is
\bea
	\cM^{-1} = \left(
\begin{array}{cccccccccc}
 \frac{1}{24} & \frac{1}{8} & -\frac{1}{16} & \frac{11}{24} & -\frac{1}{16} &
   -\frac{1}{16} & \frac{11}{24} & -\frac{1}{16} & \frac{1}{8} & \frac{1}{24} \\
 \frac{1}{24} & \frac{1}{8} & -\frac{1}{16} & \frac{11}{24} & -\frac{1}{16} &
   \frac{1}{16} & -\frac{11}{24} & \frac{1}{16} & -\frac{1}{8} & -\frac{1}{24} \\
 \frac{1}{24} & \frac{1}{8} & -\frac{1}{16} & -\frac{1}{24} & -\frac{1}{16} &
   \frac{1}{16} & \frac{1}{24} & \frac{1}{16} & -\frac{1}{8} & -\frac{1}{24} \\
 \frac{1}{24} & -\frac{1}{24} & \frac{1}{24} & \frac{1}{12} & -\frac{1}{8} & \frac{1}{8}
   & -\frac{1}{12} & -\frac{1}{24} & \frac{1}{24} & -\frac{1}{24} \\
 \frac{1}{24} & \frac{1}{8} & -\frac{1}{16} & -\frac{1}{24} & -\frac{1}{16} &
   -\frac{1}{16} & -\frac{1}{24} & -\frac{1}{16} & \frac{1}{8} & \frac{1}{24} \\
 \frac{1}{24} & -\frac{1}{24} & \frac{1}{24} & \frac{1}{12} & -\frac{1}{8} & -\frac{1}{8}
   & \frac{1}{12} & \frac{1}{24} & -\frac{1}{24} & \frac{1}{24} \\
 \frac{1}{24} & -\frac{1}{24} & -\frac{1}{12} & \frac{1}{12} & 0 & 0 & \frac{1}{12} &
   -\frac{1}{12} & -\frac{1}{24} & \frac{1}{24} \\
 0 & 0 & \frac{1}{16} & -\frac{1}{8} & \frac{1}{16} & \frac{1}{16} & -\frac{1}{8} &
   \frac{1}{16} & 0 & 0 \\
 \frac{1}{24} & -\frac{1}{24} & -\frac{1}{12} & \frac{1}{12} & 0 & 0 & -\frac{1}{12} &
   \frac{1}{12} & \frac{1}{24} & -\frac{1}{24} \\
 0 & 0 & \frac{1}{16} & -\frac{1}{8} & \frac{1}{16} & -\frac{1}{16} & \frac{1}{8} &
   -\frac{1}{16} & 0 & 0 \\
\end{array}
\right).
\eea
Thus, using $\cM^{-1}$ we can write the projector basis in terms of the monomial basis (\ref{eq:MonosJMS4}). For example, $P_{2}$, which corresponds to the tableau 
\bea \label{eq:EigValEqP2} \ytableausetup{boxsize=1.0em}
\begin{ytableau}
1 & 2 & 3 \\
4
\end{ytableau} , \hspace{10pt} \hbox{with contents}  \hspace{10pt} \ytableausetup{boxsize=1.0em}
\begin{ytableau}
0 & 1 & 2 \\
{\scriptstyle -1}
\end{ytableau} , \eea
is obtained from $\cM^{-1}$ and given by
\bea
	P_{2} = \frac{1}{8} +\frac{J_2}{8} +\frac{J_3}{8} -\frac{J_4}{24} + \frac{J_2 J_3}{8 } -\frac{J_2 J_4}{24} -\frac{J_3 J_4}{24} - \frac{J_2 J_3 J_4}{24}   .
\eea
Next, we verify the eigenvalue equation for $J_2  J_3  J_4 $ on $P_2$. From (\ref{eq:EigValEqP2}), the eigenvalue of $J_2 J_3 J_4 = -2$. 
\bea
	J_2  J_3  J_4 P_2 &=& J_2  J_3  J_4 \Big( \frac{1}{8} +\frac{J_2}{8} +\frac{J_3}{8} -\frac{J_4}{24} + \frac{J_2 J_3}{8 } -\frac{J_2 J_4}{24} -\frac{J_3 J_4}{24} - \frac{1}{24} J_2 J_3 J_4 \Big)\nonumber\\
	&=& \frac{1}{8}J_2  J_3  J_4  +\frac{1}{8}  J_3  J_4  +\frac{1}{8}J_2  J^2_3  J_4  -\frac{1}{24}J_2  J_3  J^2_4  + \frac{ J^2_3 J_4}{8 } -\frac{J_3  J^2_4  }{24} -\frac{J_2  J^2_3  J^2_4 }{24} - \frac{1}{24}  J^2_3 J^2_4, \nonumber
 \eea
 where we have used $J^{2}_{2} = 1$. Expanding the above products in terms of the basis monomials gives
 \bea
 	J_2  J_3  J_4 P_2 &=& \frac{1}{8}J_2  J_3  J_4  +\frac{1}{8}  J_3  J_4 + \frac{1}{8}\big( 2 J_2 J_4+J_3 J_4 \big) - \frac{1}{24}\big( -J_4^2+2 J_2 J_4+2 J_3 J_4+3 J_2 J_3+3 \big)\nonumber\\
	&& + \frac{1}{8}\big( J_2 J_3 J_4+2 J_4 \big) - \frac{1}{24} \big( -J_2 J_4^2+2 J_2 J_3 J_4+2 J_4+3 J_2+3 J_3 \big) \nonumber\\
	&& - \frac{1}{24} \big( J_2 J_4^2+2 J_2 J_3 J_4+2 J_4+3 J_2+3 J_3 \big)  - \frac{1}{24} \big( J_4^2+2 J_2 J_4+2 J_3 J_4+3 J_2 J_3+3 \big) \nonumber
 \eea
Finally, collecting the terms gives
\bea
	J_2  J_3  J_4 P_2 &=&  -\frac{1}{4} - \frac{J_2}{4} - \frac{J_3}{4} + \frac{J_4}{12} - \frac{1}{4} J_2 J_3 + \frac{J_2 J_4}{12}  +\frac{J_3 J_4}{12}  + \frac{1}{12} J_2 J_3 J_4 , \nonumber\\
		&=& -2 \bigg(  \frac{1}{8} +\frac{J_2}{8} +\frac{J_3}{8} -\frac{J_4}{24} + \frac{J_2 J_3}{8 } -\frac{J_2 J_4}{24} -\frac{J_3 J_4}{24} - \frac{1}{24} J_2 J_3 J_4 \bigg)\nonumber\\
		&=& -2 P_2.
\eea

\subsection{ Application to matrix units of semi-simple algebras and multi-matrix invariants } 
\label{sec:MUandMultMat} 

The Jucys-Murphy elements considered in section  \ref{sec:JM} generate maximally commuting sub-algebras 
of $ \mC ( S_n)$. The dimension of the algebra they generate is equal to the sum of dimensions of the irreducible representations of 
\bea 
\sum_{ R \vdash n  } d_R 
\eea
There is a basis of matrix units of $ \mC ( S_n)$ given by 
\bea\label{QsandDs}  
Q^R_{ ij} = { d_R \over n!  } \sum_{ \sigma \in S_n } D^R_{ ji } ( \sigma^{-1}  ) \sigma  
\eea
The terminology "matrix units" refers to the fact that these elements of the group algebra multiply as elementary matrices in distinct blocks labelled by $R$ 
\bea 
Q^R_{ij} Q^S_{ kl} = \delta_{ j k } Q^R_{ il} \delta^{ RS} 
\eea
These form a basis for $ \mC ( S_n) $ and the number of these is $ \sum_{ R } d_R^2$, which is equal to the order of the group. The projectors in the algebra generated by Jucys-Murphy elements can be identified with the diagonal matrix units, which can also be constructed using the  Young-tableaux based method for constructing the matrix elements $D^R_{ij} ( \sigma )$ in the semi-normal representation.

The Wedderburn-Artin theorem (see e.g. \cite{CurtRein}) ensures that any associative semi-simple algebra admits a basis of matrix units. An analog of the formula in terms of matrix elements of irreducible  representations \cite{RamNotes} is 
\bea\label{QsandDsGenA}  
Q^R_{ ij} = { t_{ R } }  \sum_{ b  \in \cB ( \cA )  } D^R_{ ji } ( b^*   ) b 
\eea
$\cB ( \cA ) $ is a basis of the algebra $ \cA$,$t_R$ is a constant which can be determined by using the matrix unit property,  $b^*$ is the dual of $b$ under the non-degenerate bilinear pairing which exists due to the semi-simplicity property. 

For more general algebras beyond the well-studied symmetric group, reasonably practical algorithms for finding the matrix units tend to be rare and there are some interesting recent efforts this gap, see e.g. \cite{BPR,AdrianThesis,Campbell}.

A number of associative algebras of interest in the construction of orthogonal bases of 
multi-matrix invariants \cite{KR,BHR,BCD,BCS,EHS} or orthogonal bases of tensor invariants \cite{BGSR,DiazSJ} were identified in \cite{PCA,BGSR2}. Examples are 
\begin{itemize} 

\item $ \cA( m , n ) $ relevant for 2-matrix invariants: the sub-algebra of the group algebra of the symmetric group $S_{ m+n} $ of permutations of $ [m+n] =  \{ 1, 2, \cdots , m+n \}$, which commutes with permutations in the subgroup $ S_m \times S_n$, consisting of permutations which preserve the subsets 
$ \{ 1, \cdots , m \} \subset [ m+n] $ and $ \{ m+1 , \cdots , m+n \} \subset [ m+n ] $. 

\item $  \cA ( \cB_{ N } ( m  , n ) )$ also relevant to 2-matrix invariants : the sub-algebra of the walled Brauer algebra $B_N(m,n)$, which commutes with permutations in a sub-algebra  $  \mC ( S_m \times S_n ) \subset B_N ( m,n)$. 

\item $ \cK(n) $ relevant for complex 3-index tensor invariants invariant under $ U(N) \times U(N) \times U(N)$ : the sub-algebra of $ \mC ( S_n \times S_n \times S_n) $ which commutes with the diagonally embedded $ \mC ( S_n )$. 

\end{itemize} 

The algebra  $  \cA ( \cB_{ N } ( m  , n ) )$  is also useful in the optimisation of algorithms 
in port-based quantum teleportation \cite{HSM,MHM}, and closely related symmetry-adapted representation-theoretic structures which  underlie  developments of the protocol \cite{CMMSWW,GBO}.
For any of these algebras, let us assume we have found a set of generators of a maximal commutative sub-algebra analogous to the Jucys-Murphy elements for $ \mC ( S_n)$. Applying the monomial basis algorithm for projectors as we illustrated in section \ref{sec:JM}, we can produce the $Q^R_{ii}$ of diagonal matrix units. The off-diagonal matrix units  $Q^R_{ij} $ can be constructed by using 
eigenvalue equations in the algebra.   $Q^R_{ij} $  an eigenstate of $Q^{R}_{ii}$ with unit eigenvalue under left multiplication, and an eigenstate of  $Q^R_{jj} $ with unit eigenvalue under right multiplication.

Finally, it is worth noting that in certain physically relevant examples the underlying algebra becomes non-semisimple in specific parameter regimes. A prominent example is the walled Brauer algebra \( B_N(m,n) \), which fails to be semisimple when \( m+n > N \). In such cases, combinatorial tools—such as restricted Bratteli diagrams—can be used to transport representation-theoretic information from the semisimple regime into the non-semisimple setting \cite{RamStud}. The monomial basis construction for maximal commutative subalgebras developed here in the semisimple case may therefore provide useful structural input for analysing the non-semisimple regime as well.

\section{ Summary and Outlook }

We considered sequences of sub-algebras of a commutative associative semi-simple algebra $ \cA = \cA_L$, 
\bea 
\cA_1 \rightarrow \cA_2 \rightarrow \cdots \rightarrow \cA_L 
\eea
 determined by choosing an ordered set of non-linearly generating elements $ \{ \cC_1 , \cC_2 , \cdots , \cC_L \} $ for the algebra. Degeneracy graphs were defined using the eigenvalues of 
 the generators. 
  The main conjecture \eqref{mainconj} gives a basis for $ \cA_L $ in terms of 
 monomials in the generators. The determinant of the matrix  of coefficients relating the monomial basis to the projector basis is conjectured to be of the form given in \eqref{FactoredDet} and \eqref{DetConj2}. Verifying the determinant conjectures guarantees that the matrix of coefficients is invertible and the monomials specified  by \eqref{mainconj} indeed give a basis for $ \cA_L$. Verifications for degeneracy graphs with varying numbers of nodes and with number of layers $L$ up to $5$ have been done, and the computer  code written in SAGE is provided. 

We have illustrated  applications of these results to the construction of projectors in centres of symmetric group algebras as well as maximally commuting sub-algebras of symmetric group sub-algebras generated by Jucys-Murphy elements. We outlined an important future research direction of applying the construction of projectors to maximally commuting sub-algebras of
the permutation centraliser algebras which arise in the study of orthogonal bases of multi-matrix invariants and in quantum theory. This can be combined with eigenvalue methods to give a construction for the full set of matrix units of the algebras. This will provide valuable results for the study of correlators of multi-matrix models, the AdS/CFT correspondence and quantum information theory. 

We observed the important {\it order ideal } property of the proposed monomial  basis $ \cS ( \cA_1 , \cdots , \cA_L ) $. This suggests that the algebras $ \cA_L$  under study can be further illuminated by regarding them  as quotients by an ideal $I$ of the  polynomial ring generated $ \{ \cC_1 , \cC_2 , \cdots , \cC_L \} $. The ideal is spanned as a vector space over $\mC$ by powers of the generators which are complementary to those in the basis. 

The fundamental motivating example which led to the present investigation is the fact that 
the cycle operators $ \{ T_2 , T_3 , \cdots , T_{ k_*(n) } \} $, with $k_*(n)$ growing slowly with $n$, provide a non-linear generating set for the centres of symmetric group algebras $ \mC ( S_n) $. As discussed earlier, $k_*(n) $ is also the number of normalised characters needed to distinguish all Young diagrams with $n$ boxes.   A closely related sequence, of physical interest in AdS/CFT, is $ k_* ( n , N ) $ which is the  number of normalised characters needed to distinguish all Young diagrams with $n$ boxes and no more than $N$ rows. The unexpectedly rich structure of monomial bases and graph determinants, along with the relevance to  quantum information theory and AdS/CFT, suggest that further investigations of $ k_* (n)$ and $ k_*  ( n , N )  $ in limits of large $n,N$ will be fruitful from physical as well as mathematical perspectives.

\begin{center} 
\section*{Acknowledgments}
\end{center} 

It is a pleasure to acknowledge useful conversations related to the subject of this paper with Mahesh Balasubramanian, Matt Buican, Robert de Mello Koch, Brian Dolan, Thomas Fink,  Yang-Hui He,  Chris Hull, Vishnu Jejjala, Yang Lei, Charles Nash,  Denjoe O' Connor, Adrian Padellaro, Micha\l $~$ Studzi\'nski, Ryo Suzuki. SR is supported by the Science and Technology Facilities Council (STFC) Consolidated Grant ST/X00063X/1  
``Amplitudes, strings and duality''. SR was supported by a Visiting Professorship at Dublin Institute for Advanced Studies,  held during 2024, and  is supported by ongoing Royal Society International Exchanges grant held jointly with Yang Lei. We also grateful to the London Institute of Mathematical Sciences for providing a stimulating mathematical environment for  discussions with several colleagues.  The authors acknowledge the assistance of ChatGPT in developing the computational code for the degeneracy graphs, based on the authors’ combinatorial formulation, which facilitated the efficient completion of this work.

\vskip.5cm 

\begin{appendix}

\section{ Examples and  guide to the SAGE code }
\label{sec:AppCode} 

The arXiv upload of this paper  is accompanied by an  ancillary upload consisting of three sage files:  Degeneracy-Graph-Construction.ipynb, DetConjecture1.ipynb and DetConjecture2.ipynb.

In the Degeneracy-Graph-Construction.ipynb, the code for constructing the degeneracy graphs for a general number of layers $L$, with specified $ D_1 < D_2 < \cdots < D_L$, and partition $p_1$ and compositions $ \{ c_1 , c_2 , \cdots , c_{ L-1} \} $ is given. 

As an example, the following commands specify the data $ D_1 = 2 , D_2 =4$, the partition $ p_1 = [ 2,2] $. This is a 2-layer graph ($L=2$), where the list of compositions is empty. 
\begin{verbatim} 
D = [2, 4]
p1 = [2, 2]
comps = { } 
print ( mons_L(D, p1, comps)  )   
print ( mat_L(D, p1, comps)   )  
print ( det_L(D, p1, comps) )     
graph_L(D, p1, comps)  
\end{verbatim} 
The first output is the list of monomials
\bea\label{monomExamp1} 
\{ 1,\;\cC_{1},\;\cC_{2},\;\cC_{1}\cC_{2} \} 
\eea 
The second output is the matrix for the change of basis from the monomials to the projectors. 
The coefficients along the columns are the expansion coefficients for the monomials in the list above in terms of the projectors 
\[
M=\begin{pmatrix}
1 & x^{(1)}_{1} & x^{(2)}_{1} & x^{(1)}_{1}x^{(2)}_{1}\\
1 & x^{(1)}_{1} & x^{(2)}_{2} & x^{(1)}_{1}x^{(2)}_{2}\\
1 & x^{(1)}_{2} & x^{(2)}_{3} & x^{(1)}_{2}x^{(2)}_{3}\\
1 & x^{(1)}_{2} & x^{(2)}_{4} & x^{(1)}_{2}x^{(2)}_{4}
\end{pmatrix}.
\]
The determinant takes the factorised form 
\bea\label{DetExamp1} 
\det M \;=\; -\bigl(x^{(1)}_{1}-x^{(1)}_{2}\bigr)^{2}\,\bigl(x^{(2)}_{1}-x^{(2)}_{2}\bigr)\,\bigl(x^{(2)}_{3}-x^{(2)}_{4}\bigr).
\eea 
The graph corresponding to the specified data is
% Requires: \usepackage{tikz}
\[ 
\begin{tikzpicture}[x=1cm,y=1cm]

\tikzset{
  pt/.style={circle, draw=black, fill=red!25, line width=0.6pt, inner sep=1.6pt},
  ed/.style={draw=gray!35, line width=0.8pt},
  lab/.style={text=blue, font=\scriptsize}
}

% --- coordinates (gently compressed vertical spacing) ---
\node[pt] (x1_2) at (0,  1.0) {};
\node[pt] (x1_1) at (0, -0.6) {};

\node[pt] (x2_4) at (6,  1.7) {};
\node[pt] (x2_3) at (6,  1.0) {};
\node[pt] (x2_2) at (6, -0.6) {};
\node[pt] (x2_1) at (6, -1.3) {};

% --- edges ---
\draw[ed] (x1_2) -- (x2_3);
\draw[ed] (x1_2) -- (x2_4);
\draw[ed] (x1_1) -- (x2_2);
\draw[ed] (x1_1) -- (x2_1);

% --- labels ---
\node[lab, anchor=north east] at ($(x1_2)+(0.60,-0.25)$) {$x^{(1)}_{2}$};
\node[lab, anchor=north east] at ($(x1_1)+(0.60,-0.25)$) {$x^{(1)}_{1}$};

\node[lab, anchor=north west] at ($(x2_4)+(-0.45,-0.05)$) {$x^{(2)}_{4}$};
\node[lab, anchor=north west] at ($(x2_3)+(-0.45,-0.05)$) {$x^{(2)}_{3}$};
\node[lab, anchor=north west] at ($(x2_2)+(-0.45,-0.05)$) {$x^{(2)}_{2}$};
\node[lab, anchor=north west] at ($(x2_1)+(-0.45,-0.05)$) {$x^{(2)}_{1}$};

\end{tikzpicture}
\] 
The determinant is non-vanishing for $ x^{(1)}_1 \ne x^{(1)}_2$ and for $ x^{(2)}_a \ne x^{(2)}_b $, whenever the nodes corresponding to $ x^{(2)}_a $ and $x^{(2)}_b$ share the same parent. 

\vskip.3cm

The next example is an $L=3$ graph, with $ D_1  = 2, D_2 = 4, D_3 = 7   $. The partition 
$ p = [ 2,2]$ specifies that the two nodes in the first layer each  connect to two nodes in the second layer. The composition $c_2 = [ 2,1,2,2] $ specify that connections between the second layer and the third layer, starting from the lowest node in the second layer.  This data is specified in the sagemath code as follows, followed by an instruction to print the data. 
\begin{verbatim} 
D = [2, 4, 7]
p1 = [2, 2]
comps = {2: [2, 1, 2, 2]}  
print( graph_L(D, p1, comps))
print ( mons_L(D, p1, comps)  )  
print ( mat_L(D, p1, comps)   )   
det_L(D, p1, comps) 
\end{verbatim} 
The graph is 
% Requires: \usepackage{tikz}
\begin{center}
\begin{tikzpicture}[x=1cm,y=1cm]

\tikzset{
  pt/.style={circle, draw=black, fill=red!25, line width=0.6pt, inner sep=1.6pt},
  ed/.style={draw=gray!35, line width=0.8pt},
  lab/.style={text=blue, font=\scriptsize}
}

% --- coordinates (chosen to match your Sage layout qualitatively) ---
% Layer 1
\node[pt] (x1_2) at (0,  2.2) {};
\node[pt] (x1_1) at (0,  1.2) {};

% Layer 2 (top to bottom: x2_4, x2_3, x2_2, x2_1)
\node[pt] (x2_4) at (3,  2.7) {};
\node[pt] (x2_3) at (3,  2.2) {};
\node[pt] (x2_2) at (3,  1.2) {};
\node[pt] (x2_1) at (3,  0.7) {};

% Layer 3 (top to bottom: x3_7, x3_6, x3_5, x3_4, x3_3, x3_2, x3_1)
\node[pt] (x3_7) at (6,  3.2) {};
\node[pt] (x3_6) at (6,  2.7) {};
\node[pt] (x3_5) at (6,  2.2) {};
\node[pt] (x3_4) at (6,  1.7) {};
\node[pt] (x3_3) at (6,  1.2) {};
\node[pt] (x3_2) at (6,  0.7) {};
\node[pt] (x3_1) at (6,  0.2) {};

% --- edges layer 1 -> 2 ---
\draw[ed] (x1_2) -- (x2_4);
\draw[ed] (x1_2) -- (x2_3);
\draw[ed] (x1_1) -- (x2_2);
\draw[ed] (x1_1) -- (x2_1);

% --- edges layer 2 -> 3 ---
\draw[ed] (x2_4) -- (x3_7);
\draw[ed] (x2_4) -- (x3_6);

\draw[ed] (x2_3) -- (x3_5);
\draw[ed] (x2_3) -- (x3_4);

\draw[ed] (x2_2) -- (x3_3);

\draw[ed] (x2_1) -- (x3_2);
\draw[ed] (x2_1) -- (x3_1);

% --- labels (xI_a -> x^{(I)}_a), placed to the right like your picture ---
\node[lab, anchor=west] at ($(x1_2)+(0.25,-0.15)$) {$x^{(1)}_{2}$};
\node[lab, anchor=west] at ($(x1_1)+(0.25,-0.15)$) {$x^{(1)}_{1}$};

\node[lab, anchor=west] at ($(x2_4)+(0.25,-0.15)$) {$x^{(2)}_{4}$};
\node[lab, anchor=west] at ($(x2_3)+(0.25,-0.15)$) {$x^{(2)}_{3}$};
\node[lab, anchor=west] at ($(x2_2)+(0.25,-0.15)$) {$x^{(2)}_{2}$};
\node[lab, anchor=west] at ($(x2_1)+(0.25,-0.15)$) {$x^{(2)}_{1}$};

\node[lab, anchor=west] at ($(x3_7)+(0.25,-0.15)$) {$x^{(3)}_{7}$};
\node[lab, anchor=west] at ($(x3_6)+(0.25,-0.15)$) {$x^{(3)}_{6}$};
\node[lab, anchor=west] at ($(x3_5)+(0.25,-0.15)$) {$x^{(3)}_{5}$};
\node[lab, anchor=west] at ($(x3_4)+(0.25,-0.15)$) {$x^{(3)}_{4}$};
\node[lab, anchor=west] at ($(x3_3)+(0.25,-0.15)$) {$x^{(3)}_{3}$};
\node[lab, anchor=west] at ($(x3_2)+(0.25,-0.15)$) {$x^{(3)}_{2}$};
\node[lab, anchor=west] at ($(x3_1)+(0.25,-0.15)$) {$x^{(3)}_{1}$};

\end{tikzpicture}
\end{center}
The list of monomials is 
\[
1,\;\cC_{1},\;\cC_{3},\;\cC_{1}\cC_{3},\;\cC_{2},\;\cC_{1}\cC_{2},\;\cC_{2}\cC_{3}.
\]
The matrix is 
\[
M=\begin{pmatrix}
1 & x^{(1)}_{1} & x^{(3)}_{1} & x^{(1)}_{1}x^{(3)}_{1} & x^{(2)}_{1} & x^{(1)}_{1}x^{(2)}_{1} & x^{(2)}_{1}x^{(3)}_{1}\\
1 & x^{(1)}_{1} & x^{(3)}_{2} & x^{(1)}_{1}x^{(3)}_{2} & x^{(2)}_{1} & x^{(1)}_{1}x^{(2)}_{1} & x^{(2)}_{1}x^{(3)}_{2}\\
1 & x^{(1)}_{1} & x^{(3)}_{3} & x^{(1)}_{1}x^{(3)}_{3} & x^{(2)}_{2} & x^{(1)}_{1}x^{(2)}_{2} & x^{(2)}_{2}x^{(3)}_{3}\\
1 & x^{(1)}_{2} & x^{(3)}_{4} & x^{(1)}_{2}x^{(3)}_{4} & x^{(2)}_{3} & x^{(1)}_{2}x^{(2)}_{3} & x^{(2)}_{3}x^{(3)}_{4}\\
1 & x^{(1)}_{2} & x^{(3)}_{5} & x^{(1)}_{2}x^{(3)}_{5} & x^{(2)}_{3} & x^{(1)}_{2}x^{(2)}_{3} & x^{(2)}_{3}x^{(3)}_{5}\\
1 & x^{(1)}_{2} & x^{(3)}_{6} & x^{(1)}_{2}x^{(3)}_{6} & x^{(2)}_{4} & x^{(1)}_{2}x^{(2)}_{4} & x^{(2)}_{4}x^{(3)}_{6}\\
1 & x^{(1)}_{2} & x^{(3)}_{7} & x^{(1)}_{2}x^{(3)}_{7} & x^{(2)}_{4} & x^{(1)}_{2}x^{(2)}_{4} & x^{(2)}_{4}x^{(3)}_{7}
\end{pmatrix}.
\]
The determinant is 
\bea\label{DetMexamp2} 
\det M
=\bigl(x^{(1)}_{1}-x^{(1)}_{2}\bigr)^{3}
\bigl(x^{(2)}_{1}-x^{(2)}_{2}\bigr)
\bigl(x^{(2)}_{3}-x^{(2)}_{4}\bigr)^{2}
\bigl(x^{(3)}_{1}-x^{(3)}_{2}\bigr)
\bigl(x^{(3)}_{4}-x^{(3)}_{5}\bigr)
\bigl(x^{(3)}_{6}-x^{(3)}_{7}\bigr).
\eea
It is a generalized Vandermonde with factors which ensure that the determinant is not vanishing when the eigenvalues of nodes in the first layer are distinct, and the eigenvalues of nodes in subsequent layers sharing the same parent are distinct. These inequalities precisely reflect the fact that  the edges of the  degeneracy graph from one layer to the next correspond to the resolution  the projectors of $ \cA_i$ into those of $ \cA_{i+1}$. This is expressed in equation \eqref{Expitoipl1}  by the apperance of the blocks $B^{(i+1)}_a$ which encode the connections in the graph.

As in the first example, the determinant is exactly equal, up to a sign, to the product over all pairs  of nodes in the first layer, along with all pairs in subsequent layers sharing the same parent : for each pair we have the difference of eigenvalues raised to a positive power. A neater way to state this is to add a zeroth layer corresponding to the one dimensional algebra $ \cA_0$ with a single node, and connect the single node to every node in the first layer where the nodes correspond to the projectors in $\cA_1$. The degeneracy graph then becomes a rooted layered graph.  Then the neater statement is that the determinant is, up a sign, the product over all pairs of nodes in layer one and higher, which share the same parent. For each pair the determinant has a factor of the corresponding difference of eigenvalues raised to a positive power. This is Determinant Conjecture 1 \eqref{FactoredDet}.  The code for systematic checks of this conjecture is in the file DetConjecture1.ipynb. Using input $ D_1 , D_2 , \cdots , D_L$, the code scans through the different possible partitions and compositions, and reports  on the validity of Conjecture 1. 

The usage of the conjecture checker DetConjecture1.ipynb, for a specific choice of data for dimensions $D_i$ and partition/compositions is illustrated by the following example. 
\begin{verbatim} 
D = [2, 4, 7, 12 ]
p1 = [2, 2]
comps = {2: [2, 1, 2, 2] , 3: [2,3,1,2,2,1,1  ] } 
res = test_refined_conjecture(D, p1, comps)
\end{verbatim} 
The output reports the positive outcome of the test as 
\begin{verbatim} 
missing sibling pairs = 0
\end{verbatim} 
The usage of DetConjecture1.ipynb, for a sweep over the partitions and compositions with fixed dimensions $D_i$ is illustrated by: 
\begin{verbatim} 
sweep_refined_conjecture_fixed_D([2,5,9])
\end{verbatim} 
The output illustrating successful check is: 
\begin{verbatim} 
=== Sweep summary (refined conjecture) ===
D = [2, 5, 9]
tested = 140 (out of total 140)
failures = 0
{'D': [2, 5, 9], 'tested': 140, 'total': 140, 'failures': []}
\end{verbatim} 

We can see in the two examples above, that the differences of eigenvalues associated with nodes at level $L$ appear with power equal to $1$. Note also the factor $(x^{(2)}_{3}-x^{(2)}_{4})^{2}$ in \eqref{DetMexamp2}. If the second graph is truncated by dropping all the nodes at layer 1, and all the nodes at layer 2 except the two with eigenvalue labels  $  x^{(2)}_{3} , x^{(2)}_{4} $, and we further only keep the descendants of these two nodes at layer $3$, we get the graph in example 1, up to renaming of the eigenvalue labels. Importantly, the exponent of the difference in eigenvalues for the truncated graph is the same as before truncation : that it, the exponent of $2$ also occurs for $   ( x^{(1)}_{1} - x^{(1)}_{2} )  $ in \eqref{DetExamp1}. The exponent $2$ is also equal to the number of monomials in the truncated graph having a non-zero power of 
$ \cC_1$. The positivity of this number is easy to see by applying \eqref{mainconj} to the truncated graph.

The second determinant conjecture \eqref{DetConj2} states the exponent for any eigenvalue difference at layers 
$ \{ 1 , \cdots , L -1 \}$ is thus preserved under the graph truncation operation we just described, and is equal to the number of monomials containing $ \cC_1$ in the monomial basis for 
the truncated graph. The positivity of the latter number is evident from \eqref{mainconj}. 
The file DetConjecture2.ipynb gives systematic tests of this conjecture for specified $ \{ D_1 , D_2 , \cdots , D_L \}$.

The usage of DetConjecture2.ipynb for fixed $D_i$ and fixed choice of partitions/compositions is  illustrated by 
\begin{verbatim} 
D = [2, 4, 7, 12 ]
p1 = [2, 2]
comps = {2: [2, 1, 2, 2] , 3: [2,3,1,2,2,1,1  ] } 
test_conjecture2(D, p1, comps)
\end{verbatim} 
The successful outcome of the test is illustrated by 
\begin{verbatim} 
=== Conjecture 2 test ===
D=[2, 4, 7, 12], p1=[2, 2], comps={2: [2, 1, 2, 2], 3: [2, 3, 1, 2, 2, 1, 1]}
mismatches = 0
\end{verbatim} 
The usage for a sweep over the partitions and compositions at fixed $D_i$ is illustrated by 
\begin{verbatim}
summary = sweep_conjecture2_fixed_D([3,5,8])
\end{verbatim}  
with positive outcome evidenced by
\begin{verbatim}
=== Sweep summary (Conjecture 2) ===
D = [3, 5, 8]
tested = 70 (out of total 70)
failures = 0
\end{verbatim}

\vskip.5cm

\end{appendix} 

%%%%%%%%%%%%%%%%%%%%%%%%%%%%%%%%%%%%%%%%%%%%%%%%%%%%%%%%%%%%%%%%%%%%%%%%%%%%%%

\end{document}